\documentclass[11pt]{article}

\usepackage{amsmath}
\usepackage{amssymb}
\usepackage{latexsym}
\usepackage{graphicx}
\usepackage{amsthm}
\usepackage{color}
\usepackage{bbm}
\usepackage{float}

\usepackage{amsfonts}
\usepackage{bm}
\usepackage{ifthen}
\usepackage{cite}
\usepackage{caption}
\usepackage{subcaption}
\usepackage{mathtools}
\usepackage{footmisc}

\newcommand{\ignore}[1]{}

\newcommand{\jordan}[1]{$v^*_{#1}$}
\newcommand{\jordantime}[2]{$v^*_{\mathcal{#1}_{#2}}$}
\newcommand{\neighbor}[2]{$n^{#2}_{#1}$}
\newcommand{\neighborJordan}[2]{$n^{#2}_{v^*_{#1}}$}
\newcommand{\subtree}[2]{${#1}_{#2}$}
\newcommand{\subtreeJordan}[2]{$(#1,v^*_{#1})_{n^{#2}_{v_{#1}^*}} $}
\newcommand{\subtreeJordanTime}[3]{$(\mathcal{G}_{#1},v^*_{#2})_{n^{#3}_{v_{#2}^*}}$}

\newtheorem{theorem}{Theorem}[section]
\newtheorem{lemma}{Lemma}[section]
\newtheorem{definition}{Definition}[section]
\newtheorem{corollary}{Corollary}[section]

\title{Persistence of the Jordan center in Random Growing Trees}

\author{Sarath Pattathil, Nikhil Karamchandani and Dhruti Shah
\thanks{S. Pattathil is with the department of Electrical Engineering and Computer Science, Massachusetts Institute of Technology. E-mail: sarathp@mit.edu \newline
N. Karamchandani and D. Shah are with the department of Electrical Engineering, Indian Institute of Technology, Bombay. E-mail: nikhil.karam@gmail.com, dhruti96shah@gmail.com }
}

\date{}

\begin{document}

\maketitle

\begin{abstract}
The Jordan center of a graph is defined as a vertex whose maximum distance to other nodes in the graph is minimal, and it finds applications in facility location and source detection problems. We study properties of the Jordan center in the case of random growing trees. In particular, we consider a regular tree graph on which an infection starts from a root node and then spreads along the edges of the graph according to various random spread models. For the Independent Cascade (IC) model and the discrete Susceptible Infected (SI) model, both of which are discrete time models, we show that as the infected subgraph grows with time, the Jordan center persists on a single vertex after a finite number of timesteps. Finally, we also study the continuous time version of the SI model and bound the maximum distance between the Jordan center and the root node at any time. 
\end{abstract}

\section{Introduction}
\label{sec:Intro}
There are several notions of node centrality in graphs that have been proposed in the literature, such as distance centrality, betweenness centrality, degree centrality, and eigenvalue centrality (see for example \cite{Borgatti, Jackson}). These centrality measures find application in a wide variety of contexts, such as identifying influential/critical entities in social/communication networks \cite{Tan, Hwang}, source/root detection in diffusion/growing networks \cite{SZ, SIR, Fanti1, BDL}, and facility location problems \cite{JC, Handler, Slater}. 

Recently, there has been some work on analyzing the movement of graph centers in randomly growing graphs. In particular, the question of interest is whether a given notion of a graph center \textit{persists} in a random growth model, i.e., does a particular vertex emerge and remain as the center after a finite number of steps, even as the graph continues to grow or does the center move around indefinitely. \cite{Galashin} considered the preferential attachment model of growth \cite{BA}, where at each step a new node connects to existing nodes with probability proportional to the degrees of the existing nodes in the graph, and showed that with probability $1$ the degree center (node with the maximum degree) of the graph persists in this random growth model. In other words, after a finite number of timesteps, there is a single node which remains as the most-connected node in the network. In more recent work, \cite{Jog1} considered the centroid or `balancedness center' \cite{Mitchell} of a tree graph where the score of a given vertex is the maximum size of the subtrees rooted at its neighbors, and the center corresponds to the vertex with the minimum score. \cite{Jog1} proved the persistence of the balancedness center in both the uniform as well as preferential attachment tree growth models, as well as the random growing tree arising from an infection spread according to the popular Susceptible-Infected (SI) model on an underlying regular tree. Some follow-up work from the authors \cite{Jog2} studied the same question in sublinear preferential attachment trees. 

In this paper, we focus on the Jordan center of random growing trees. The Jordan center of a graph is defined as a vertex whose maximum distance to other nodes in the graph is minimal, and it finds applications in facility location \cite{JC, Handler, Slater} and source detection problems \cite{SIR, Ying2}. We consider an underlying regular tree on which an infection starts from a root node and spreads along the edges of the graph according to various random spreading models. For the Independent Cascade (IC) model and the discrete Susceptible Infected (SI) model, both of which are discrete time models, we show that as the infected subgraph grows with time, the Jordan center persists on a single vertex after a finite number of timesteps. 
Finally, we also study the continuous time version of the SI model and while we are unable to prove persistence in this case, we bound the maximum distance between the Jordan center and the root node at any time. 
To the best of our knowledge, there has been no prior study on the persistence properties of the Jordan center. In terms of previous results which are relevant to the contents of this paper, \cite{SIR} showed that when the infection spreads according to the discrete SI model on an underlying regular tree, the distance between the Jordan center of the infected subtree and the root node is bounded by a finite constant. We extend this result to demonstrate that under the discrete SI model, the Jordan center in fact reaches persistence in a finite number of timesteps, and thereafter does not move. 

Apart from being a very natural property to examine, persistence of centrality measures also has important implications for root-finding algorithms in growing graphs, which have applications such as identifying the center of an epidemic or the source of a rumor. It has been shown \cite{KhimLoh, BDL} that selecting the top nodes according to appropriate centrality measures can yield a confidence set for the root node such that it belongs to this set with high probability. Persistence of centrality measures in this context has been recently used \cite{Jog1} to show that the confidence set thus generated stabilizes after some finite time, implying that the construction is in some sense robust, which is a desirable property. This also suggests the possibility of savings in terms of computational costs, since while the size of the underlying graph and the corresponding complexity of running the root-finding algorithm increase with time, the output of the algorithm does not change beyond a certain threshold. We realize that while our results hold for trees, real-world networks are not tree-like and thus any application of these results to such settings would need an extension to more general network topologies. However, this is technically a really challenging problem and we consider the study of regular trees to be an important first step in this direction. Finally, while this paper mainly focuses on regular trees, some of our results do generalize to irregular trees and we discuss these briefly in Section~\ref{sec:Discussion}. 


The rest of the paper is organized as follows. We describe the graph and infection models we work with in Section~\ref{sec:Model}. Section~\ref{sec:prelim_results} describes some preliminary results for the Jordan center under the infection models we study. In Sections~\ref{sec:Jordan_IC} and \ref{sec:Jordan_DSI}, we show that the Jordan center is persistent in the IC model and the discrete SI models respectively. 
Section~\ref{sec:Jordan_SI} bounds the maximum distance between the Jordan center and the root node in the continuous time SI model. 
Section~\ref{sec:Simulation} provides some simulation results and Section~\ref{sec:Discussion} discusses generalizations of the results that we have proved and also some open problems.

\section{Problem Setup}
\label{sec:Model}
Let  $\mathcal{G} = (V(\mathcal{G}), E(\mathcal{G}))$ denote a tree graph. Define the function $\psi_\mathcal{G}: V(\mathcal{G}) \rightarrow \mathbb{N}$ as
\begin{align}
\psi_\mathcal{G}(u) = \text{depth} (\mathcal{G},u)
\end{align}
where $\text{depth}(\cdot, \cdot)$ is a function which takes as input a tree graph $\mathcal{G}$ and a vertex $u$, and returns the depth of $\mathcal{G}$ when rooted at $u$ i.e. the maximum distance of any  vertex from the root $u$. 

\begin{definition}
\label{def:Jordan}
A node \jordan{\mathcal{G}} is the Jordan center of a tree $\mathcal{G}$ if:
\begin{align}
\text{\jordan{\mathcal{G}}} \in \underset{v \in V(\mathcal{G})}{\operatorname{argmin}} \: \psi_\mathcal{G}(v)
\end{align}
and let
\begin{align}
\psi_\mathcal{G} = \psi_\mathcal{G}(\text{\jordan{\mathcal{G}}})
\end{align}
\end{definition}
$\psi_\mathcal{G}$ is called the \textit{centrality} of the Jordan center of tree $\mathcal{G}$. For any two nodes $u$ and $v$, if $\psi_\mathcal{G}(u) \leq \psi_\mathcal{G}(v)$, we say that $u$ is \textit{at least as central as} $v$. It can be easily verified that in a tree, there can be at most two Jordan centers and they have to be neighbors. 

\begin{figure}[th]
\begin{center}
\includegraphics[angle=0,scale=0.30]{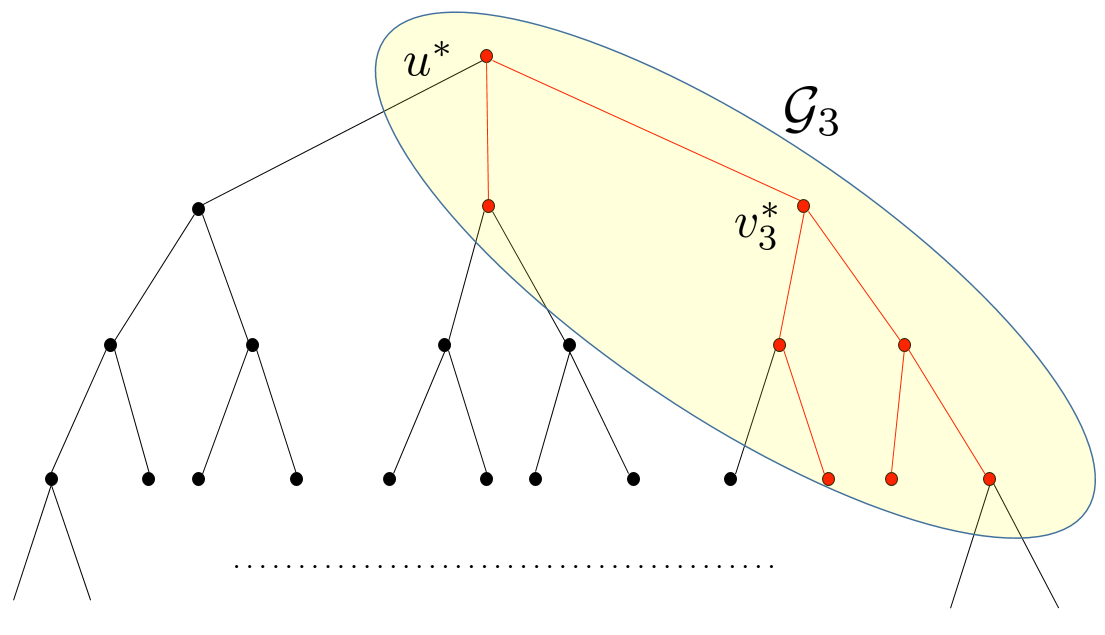}
\caption{$3$ regular tree. \jordan{3} denotes the Jordan center of the infected graph $\mathcal{G}_3$.}
\label{fig:(d+1)-regular_tree}
\end{center}
\end{figure}
The basic underlying tree model that we consider for our results is a $d+1$-regular tree for some $d\ge2$, where the root $u^{\star}$ has $d + 1$ children and every other vertex has exactly $d$ children. See Figure~\ref{fig:(d+1)-regular_tree} for an illustration. Starting from the root node $u^{\star}$, an infection spreads according to a random growth model (these will be discussed later in the section). At any time $t$, let the tree subgraph formed by infected nodes be denoted by $\mathcal{G}_t = (V_t,E_t)$. We study two properties of the Jordan center in random growing trees:
\begin{enumerate}
\item \textbf{Distance from the root:} We study how the distance of the root $u^{\star}$ to the Jordan center \jordan{\mathcal{G}_t} of the tree $\mathcal{G}_t$ changes with time $t$. In particular, can the distance  become very large or does the Jordan center remain close to the root at all times?
\item \textbf{Persistence of the Jordan center:} Here we study if the Jordan center is persistent. A Jordan center is said to be persistent if $\exists \ t_0 < \infty$ such that $\forall \ t \geq t_0$,  \jordantime{G}{t} = \jordantime{G}{t_0} i.e. the Jordan center does not change after time $t_0$. 
\end{enumerate}

\noindent We study three different random growth models in this paper:

\begin{definition}
\label{def:IC_Model}
\textbf{(Independent Cascade (IC) Model): }This is a discrete time model. The root node $u^{\star}$ is infected at time $0$. At each time step, an infected node infects each of its as yet uninfected neighbors independently with probability $p$. Each node can infect its neighbors for exactly one time step, after which it cannot infect and becomes sterile thereafter. We assume\footnote{If $pd \le 1$, the infected subtree will be finite under the IC model with probability $1$ and the problem is uninteresting.} $pd > 1$. 
\end{definition}

\begin{definition}
\label{def:DSI_Model}
\textbf{(Discrete Susceptible Infected (SI) Model): }This is a discrete time model. The root node $u^{\star}$ is infected at time $0$. At each time step, an infected node infects each of its as yet uninfected neighbors independently with probability $p$. An infected node can infect its neighbors at every following time step, unlike the case of the IC model. We assume $pd > 1$. 
\end{definition}

\begin{definition}
\label{def:SI_Model}
\textbf{(Susceptible Infected (SI) Model): }This is a continuous time model. The root node $u^{\star}$ is infected at time $0$. Each infected node can infect its as yet uninfected neighbors independently, and the time taken for the infection to spread along each edge is an independent and identically distributed random variable which is exponentially distributed with parameter $\lambda$ (which we take as $1$, without loss of generality). 
\end{definition}

\textit{Main results:} In a nutshell, the main results of this paper demonstrate the persistence of the Jordan center of an infected subtree growing according to the IC and the discrete SI models on an underlying $d+1$-regular tree. As corollaries, we show that for the IC and the discrete SI models, the distance between the root of the underlying tree and the Jordan center of the infected subtree is finite. 
Finally, we also  consider the the continuous time SI model. While we are unable to prove persistence of the Jordan center for this case, we show that if the depth of the infected subtree is $n$, then the distance between the root of the underlying tree and the Jordan center of the infected subtree is at most $O(\log n)$. 
While this paper mainly focuses on regular trees, some of our results do generalize to irregular trees and we discuss these briefly in Section~\ref{sec:Simulation}. 

We define some more notation which will be used throughout the paper. A rooted tree is denoted by $(\mathcal{G}, u)$, where $u \in V(\mathcal{G})$ is the root node. For a rooted tree $(\mathcal{G}, u)$, let \subtree{(\mathcal{G},u)}{v} denote the subtree rooted at $v$ which consists of all vertices $w \in V(\mathcal{G})$ such that the path from $u$ to $w$ passes through $v$. Alternatively, if we interpret the rooted tree  $(\mathcal{G}, u)$ as a tree of descendants of the ancestor $u$,  then \subtree{(\mathcal{G},u)}{v} denotes the subtree rooted at $v$ and  consisting of $v$ and the descendants of $v$. For any positive integer $k$, let $[1:k]$ denote the set $\{1,2,\ldots,k\}$.

\section{Preliminaries}
\label{sec:prelim_results}
In this section, we will present a few general properties regarding the movement of the Jordan center in growing trees, which hold true for all the spread models we consider in this paper. For a node $v$ , define \neighbor{v}{i} to be the $i^{th}$ neighbor of $v$ in the rooted subtree $(\mathcal{G},v)$, for some ordering of the neighboring nodes. Without loss of generality, assume that the ordering of the neighbors is such that \neighbor{v}{1} is the neighbor which has the deepest subtree, $(\mathcal{G},v)_{\text{\neighbor{v}{1}}}$ and \neighbor{v}{2} is the neighbor which has the second deepest subtree, $(\mathcal{G},v)_{\text{\neighbor{v}{2}}}$. In case both the depths are the same, the numbering can be arbitrary. See Figure~\ref{fig:lemma_pre_2} for an illustration. 

\begin{lemma}
\label{lemma:3_two_subtrees}
Let the Jordan center of the tree $\mathcal{G}$ be the node \jordan{\mathcal{G}}. Then the depth of the deepest subtree \subtreeJordan{\mathcal{G}}{1} is $\psi_\mathcal{G}-1$ and the depth of the second deepest subtree \subtreeJordan{\mathcal{G}}{2} is either $\psi_\mathcal{G} - 1$ or $\psi_\mathcal{G}-2$.
\end{lemma}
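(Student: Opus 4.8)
The plan is to reduce both claims to statements about maximum distances measured from the Jordan center $v^*_{\mathcal{G}}$ and from its deepest neighbour $n^1_{v^*_{\mathcal{G}}}$, and then to invoke the minimality of $\psi_{\mathcal{G}}$ from Definition~\ref{def:Jordan}. For brevity write $D_i$ for the depth of the subtree \subtreeJordan{\mathcal{G}}{i}, so that the ordering convention on the neighbours gives $D_1 \ge D_2 \ge \cdots$. Everything becomes elementary once these subtree depths are tied to $\psi_{\mathcal{G}}$.

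First I would prove $D_1 = \psi_{\mathcal{G}} - 1$ by a two-sided bound. Every vertex $w \in V(\mathcal{G})$ lies in exactly one subtree \subtreeJordan{\mathcal{G}}{i}, and its distance to $v^*_{\mathcal{G}}$ equals $1$ plus its distance to $n^i_{v^*_{\mathcal{G}}}$, hence is at most $1 + D_i \le 1 + D_1$; maximising over $w$ gives $\psi_{\mathcal{G}} = \psi_{\mathcal{G}}(v^*_{\mathcal{G}}) \le 1 + D_1$. Conversely, a vertex realising the depth of the deepest subtree \subtreeJordan{\mathcal{G}}{1} lies at distance exactly $1 + D_1$ from $v^*_{\mathcal{G}}$, so $\psi_{\mathcal{G}} \ge 1 + D_1$. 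Combining the two inequalities yields $D_1 = \psi_{\mathcal{G}} - 1$.

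For the second claim, the ordering already forces $D_2 \le D_1 = \psi_{\mathcal{G}} - 1$, so it remains to rule out $D_2 \le \psi_{\mathcal{G}} - 3$. The key step is a re-rooting computation of $\psi_{\mathcal{G}}(n^1_{v^*_{\mathcal{G}}})$. Viewed from $n^1_{v^*_{\mathcal{G}}}$, the farthest vertex is either inside its own subtree, at distance $D_1 = \psi_{\mathcal{G}} - 1$, or is reached by first stepping to $v^*_{\mathcal{G}}$ and then descending into the deepest of the remaining subtrees, at distance $2 + D_2$. Hence $\psi_{\mathcal{G}}(n^1_{v^*_{\mathcal{G}}}) = \max(\psi_{\mathcal{G}} - 1,\, 2 + D_2)$. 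If $D_2 \le \psi_{\mathcal{G}} - 3$, then $2 + D_2 \le \psi_{\mathcal{G}} - 1$ and so $\psi_{\mathcal{G}}(n^1_{v^*_{\mathcal{G}}}) = \psi_{\mathcal{G}} - 1 < \psi_{\mathcal{G}}$, contradicting that $v^*_{\mathcal{G}}$ minimises $\psi_{\mathcal{G}}(\cdot)$. Therefore $D_2 \ge \psi_{\mathcal{G}} - 2$, and together with $D_2 \le \psi_{\mathcal{G}} - 1$ this gives $D_2 \in \{\psi_{\mathcal{G}} - 1,\, \psi_{\mathcal{G}} - 2\}$.

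The step to handle most carefully---the main, if mild, obstacle---is the re-rooting bookkeeping: I must verify that when the root moves from $v^*_{\mathcal{G}}$ to $n^1_{v^*_{\mathcal{G}}}$, every vertex outside \subtreeJordan{\mathcal{G}}{1} is reached through $v^*_{\mathcal{G}}$, so that the two cases are exhaustive and the deepest exterior vertex indeed sits at distance $2 + D_2$. This is the same reasoning that underlies the remark following Definition~\ref{def:Jordan} that a tree has at most two Jordan centers and that they must be neighbours; the borderline case $D_2 = \psi_{\mathcal{G}} - 2$ is precisely the situation in which $n^1_{v^*_{\mathcal{G}}}$ is itself a second Jordan center.
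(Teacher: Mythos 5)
Your proof is correct, and it follows the same overall strategy as the paper --- deriving a contradiction from a vertex in the direction of the deepest subtree whose eccentricity beats $\psi_{\mathcal{G}}$ --- but your execution is more careful and, in fact, more robust than the paper's own. For the second claim, the paper supposes the second deepest subtree has depth $\psi_{\mathcal{G}} - k$ with $k > 2$ and asserts that moving the center $\lfloor k/2 \rfloor$ steps into the deepest subtree yields a vertex of centrality $\psi_{\mathcal{G}} - \lfloor k/2 \rfloor$. As literally stated, that step is unjustified once $\lfloor k/2 \rfloor \ge 2$, because the deepest subtree may branch: if $n^1_{v^*_{\mathcal{G}}}$ has two children whose subtrees each have depth $\psi_{\mathcal{G}} - 2$, then a vertex $m \ge 2$ steps down one branch is at distance $(m-1) + (\psi_{\mathcal{G}} - 1) \ge \psi_{\mathcal{G}}$ from the deepest vertex of the other branch, so the centrality does not drop by $m$. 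Your argument moves only one step and computes the re-rooted eccentricity exactly, $\psi_{\mathcal{G}}(n^1_{v^*_{\mathcal{G}}}) = \max(\psi_{\mathcal{G}} - 1,\, 2 + D_2)$, which equals $\psi_{\mathcal{G}} - 1 < \psi_{\mathcal{G}}$ whenever $D_2 \le \psi_{\mathcal{G}} - 3$; this one-step contradiction is all that is ever needed and it survives arbitrary branching inside the deepest subtree, so it is the rigorous version of the paper's idea. Your two-sided bound establishing $D_1 = \psi_{\mathcal{G}} - 1$ likewise fills in what the paper dismisses as immediate from the definition, and your closing remark correctly identifies the boundary case $D_2 = \psi_{\mathcal{G}} - 2$ with the existence of a second Jordan center at $n^1_{v^*_{\mathcal{G}}}$.
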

\begin{proof}

The fact that the depth of \subtreeJordan{\mathcal{G}}{1} is $\psi_\mathcal{G}-1$ follows from the definition of the Jordan center in Definition~ \ref{def:Jordan}. The second part can be proved by contradiction. Suppose the second deepest subtree \subtreeJordan{\mathcal{G}}{2} has depth $\psi_\mathcal{G} - k$ for some $k > 2$. Then, on moving the Jordan center by $\lfloor k / 2 \rfloor$ in the direction of the deepest subtree, we reach a node whose centrality is $\psi_\mathcal{G} - \lfloor k / 2 \rfloor < \psi_\mathcal{G}$, which contradicts the fact that \jordan{\mathcal{G}} is the Jordan center.

\end{proof}
Consider a sequence of growing random trees $\{ \mathcal{G}_t, \ t \geq 0 \}$. For a discrete time model, we say that the Jordan center changes from time $t_0$ to time $t_0 + 1$, if $ \exists \: \text{\jordan{\mathcal{G}_{t_0+1}}} \in V(\mathcal{G}_{t_0+1}) \text{ s.t. } \psi_{\mathcal{G}_{t_0+1}}(\text{\jordan{\mathcal{G}_{t_0}}}) > \psi_{\mathcal{G}_{t_0+1}}(\text{\jordan{\mathcal{G}_{t_0+1}}})$ i.e. we assume that the center changes only when the centrality of the new center is \textit{strictly} lesser than the centrality of the current center. For ease of notation, define \jordan{t} as the Jordan center of the infection graph at time $t$, $\mathcal{G}_t$. 
\begin{figure}[th]
\begin{center}
\includegraphics[angle=0,scale=0.4]{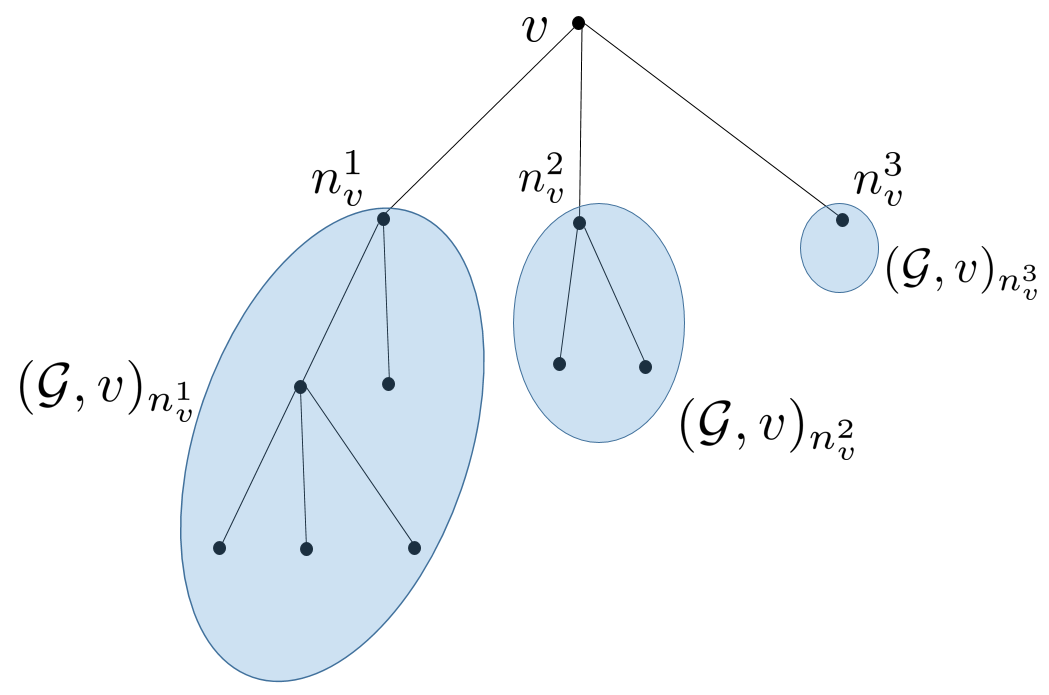}
\caption{Subtree rooted at node $v$. The neighbors and the subtrees corresponding to this root are shown in this figure.}
\label{fig:lemma_pre_2}
\end{center}
\end{figure}
\begin{lemma}
\label{lemma:dir_of_movement}
The Jordan center moves from time $t$ to time $t+1$ if and only if the second deepest subtree has a depth $\psi_{\mathcal{G}_t} - 2$ at time $t$, and at time step $t+1$, the depth of the second deepest subtree does not grow, while the deepest subtree grows by $1$. In other words, the Jordan center changes from time $t$ to $t+1$ if and only if  the following equations are satisfied:
\begin{align}
\text{depth}(\text{\subtreeJordanTime{t}{t}{2}}) =\text{depth}( \text{\subtreeJordanTime{t+1}{t}{2}}) = \psi_{\mathcal{G}_t} - 2 ,
\end{align}
\begin{align}
\text{depth}(\text{\subtreeJordanTime{t}{t}{1}}) &= \psi_{\mathcal{G}_t} - 1 , \\
\text{depth}(\text{\subtreeJordanTime{t+1}{t}{1}}) &= \psi_{\mathcal{G}_t} .
\end{align}
Furthermore, in case the center moves, the new Jordan center at time $t + 1$ will be the neighbor \neighborJordan{t}{1}. 
\end{lemma}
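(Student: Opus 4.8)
The plan is to reduce everything to two elementary facts about how eccentricities evolve and then read off the movement condition by a direct re-rooting computation. Throughout write $v=v^*_t$ and $\psi=\psi_{\mathcal{G}_t}$, set $w:=n^1_{v^*_t}$ for the neighbor rooting the deepest subtree at time $t$, and for $s\in\{t,t+1\}$ let $d_1(s)$ be the depth of the subtree of $v$ rooted at $w$ at time $s$ and $d_2(s)$ the largest depth among all the \emph{other} subtrees of $v$ at time $s$. The first fact I would establish is monotone, bounded growth: since in both discrete models every newly infected node is a leaf attached to a node already infected one step earlier, the eccentricity of any fixed vertex is non-decreasing in time and increases by at most $1$ per step, so $\psi_{\mathcal{G}_t}(u)\le\psi_{\mathcal{G}_{t+1}}(u)\le\psi_{\mathcal{G}_t}(u)+1$ for every $u$, and each subtree depth likewise grows by at most $1$. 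The second fact is a re-rooting identity: for a neighbor $x$ of $v$, $\psi_{\mathcal{G}}(x)=\max\!\big(\delta_x,\,2+\delta^{(2)}\big)$, where $\delta_x$ is the depth of the subtree hanging off $x$ away from $v$ and $\delta^{(2)}$ is the largest depth among the remaining subtrees of $v$. I would prove this by splitting the vertices into $x$'s subtree and its complement, the latter being reachable from $x$ only through $v$.

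For the ``only if'' direction I would argue as follows. Since $\psi$ is the minimum eccentricity at time $t$ and eccentricities never decrease, every vertex has eccentricity at least $\psi$ at time $t+1$. By Lemma~\ref{lemma:3_two_subtrees}, $d_1(t)=\psi-1$, hence $\psi_{\mathcal{G}_{t+1}}(v)=1+d_1(t+1)$ with $d_1(t+1)\in\{\psi-1,\psi\}$. If the deepest subtree does not grow, i.e. $d_1(t+1)=\psi-1$, then $v$ still attains eccentricity $\psi$ and remains a Jordan center; so for the center to move we must have $d_1(t+1)=\psi$, which is exactly the two conditions stating that the deepest subtree grows from $\psi-1$ to $\psi$, and then $\psi_{\mathcal{G}_{t+1}}(v)=\psi+1$. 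In this case the new center must have eccentricity exactly $\psi$. A short case check over the location of a candidate vertex $u$ — using the re-rooting identity for neighbors of $v$ and a crude distance bound for vertices further away — shows that every $u\neq w$ has eccentricity at least $\psi+1$ at time $t+1$, so the only possible new center is $w=n^1_{v^*_t}$; this already yields the ``furthermore'' statement. Finally $\psi_{\mathcal{G}_{t+1}}(w)=\max\!\big(\psi,\,2+d_2(t+1)\big)$ equals $\psi$ only when $d_2(t+1)\le\psi-2$, and since Lemma~\ref{lemma:3_two_subtrees} gives $d_2(t)\ge\psi-2$ while depths are non-decreasing, this forces $d_2(t)=d_2(t+1)=\psi-2$, which is the condition on the second-deepest subtree.

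For the ``if'' direction I would substitute the hypotheses into the same identities: the deepest-subtree conditions give $\psi_{\mathcal{G}_{t+1}}(v)=\psi+1$, while the second-deepest condition together with the re-rooting identity give $\psi_{\mathcal{G}_{t+1}}(w)=\max(\psi,\,2+(\psi-2))=\psi<\psi+1$, so $v$ is no longer a minimizer and the center moves, necessarily to $w$ by the previous paragraph.

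The main obstacle I anticipate is the case analysis certifying that $w$ is the \emph{only} vertex able to reach eccentricity $\psi$ at time $t+1$; this is what pins the new center down to $n^1_{v^*_t}$ and makes the equivalence tight. A related subtlety is the reading of ``second deepest subtree'': the quantity that actually controls the move is the running second-largest depth $d_2(t+1)$ among \emph{all} subtrees of $v$ other than $w$'s, not merely the depth of the subtree labelled $n^2$ at time $t$ — if a third subtree grew to depth $\psi-1$ it would raise $\psi_{\mathcal{G}_{t+1}}(w)$ to $\psi+1$ and block the move — so I would interpret the stated condition in terms of this running second-deepest depth. Everything else is routine once monotone bounded growth and the re-rooting identity are in place.
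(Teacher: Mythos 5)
Your proposal is correct, and it reaches the lemma by a more systematic route than the paper does. The paper's proof works by elimination: invoking Lemma~\ref{lemma:3_two_subtrees}, it lists the complementary scenarios (second-deepest depth $\psi_{\mathcal{G}_t}-1$; second-deepest depth $\psi_{\mathcal{G}_t}-2$ but growing at $t+1$; neither deepest nor second-deepest growing), asserts that ``it can be verified'' the center is stationary in each, and then argues separately that a moving center must land on $n^1_{v^*_t}$ because every other neighbor ends up strictly less central than $v^*_t$; moreover it explicitly assumes the deepest and second-deepest subtrees are unique and defers ties to an ``easily adapted'' general case. You instead isolate two reusable facts --- monotone growth of depths and eccentricities by at most one per step, and the re-rooting identity $\psi_{\mathcal{G}}(x)=\max\bigl(\delta_x,\,2+\delta^{(2)}\bigr)$ for a neighbor $x$ of $v^*_t$ --- and then both implications, plus the ``furthermore'' clause, follow by direct substitution. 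This makes rigorous precisely the steps the paper leaves as verification, and your closing remark (that the operative quantity is the running second-largest depth over \emph{all} subtrees other than $w$'s, since a third subtree reaching depth $\psi_{\mathcal{G}_t}-1$ at time $t+1$ would block the move) identifies exactly the degeneracy that the paper's uniqueness assumption is there to exclude; your reading is the one under which the stated equivalence is actually true.

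One slip to patch, in the ``only if'' direction: the identity $\psi_{\mathcal{G}_{t+1}}(v^*_t)=1+d_1(t+1)$, and the ensuing claim that if $d_1$ does not grow then $v^*_t$ ``still attains eccentricity $\psi$,'' both fail in the tie case $d_2(t)=\psi_{\mathcal{G}_t}-1$ when the subtree you did not label as deepest grows while $w$'s does not: there $\psi_{\mathcal{G}_{t+1}}(v^*_t)=\psi_{\mathcal{G}_t}+1$. The conclusion (no move) is still correct, because the neighbor rooting the subtree that grew then has eccentricity $\max\bigl(\psi_{\mathcal{G}_t},\,2+(\psi_{\mathcal{G}_t}-1)\bigr)=\psi_{\mathcal{G}_t}+1$ as well, so no vertex strictly improves on $v^*_t$ and, per the paper's definition of a center change, the center stays put. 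Fix it either by running this symmetric computation in the tie case, or by adopting the convention that when the two deepest subtrees tie at time $t$, the label $n^1_{v^*_t}$ is assigned to whichever of them grows at $t+1$.
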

\begin{proof}
For sake of simplicity, we assume here that the deepest and second deepest subtrees at time $t$ are unique.  By this we mean that $\forall \: i \geq 3$
\begin{align*}
\text{depth}(\text{\subtreeJordanTime{t}{t}{i}}) < \text{depth}(\text{\subtreeJordanTime{t}{t}{2}}) .
\end{align*}
The proof can be easily adapted to the more general scenario.  From Lemma \ref{lemma:3_two_subtrees}, we know that the second deepest subtree \subtreeJordanTime{t}{t}{2} can only have two possible depths $\psi_{\mathcal{G}_t} - 1$ or $\psi_{\mathcal{G}_t} - 2$. Now, consider all  scenarios other than the one mentioned in the lemma: 
\begin{enumerate}
\item depth$\left(\text{\subtreeJordanTime{t}{t}{2}} \right) = \psi_{\mathcal{G}_t} - 1$: the second deepest subtree has depth $\psi_{\mathcal{G}_t} - 1$ at time $t$, 
\item depth$\left(\text{\subtreeJordanTime{t}{t}{2}} \right) = \psi_{\mathcal{G}_t} - 2$ and $\text{depth}\left(\text{\subtreeJordanTime{t+1}{t}{2}} \right) = \psi_{\mathcal{G}_t} - 1$: the second deepest subtree has depth $\psi_{\mathcal{G}_t} - 2$ at time $t$, but the depth grows at time $t+1$, and 
\item depth$\left(\text{\subtreeJordanTime{t}{t}{2}} \right)$ = depth$\left(\text{\subtreeJordanTime{t+1}{t}{2}} \right) = \psi_{\mathcal{G}_t} - 2$, and depth$\left(\text{\subtreeJordanTime{t}{t}{1}} \right)$ = depth$\left(\text{\subtreeJordanTime{t+1}{t}{1}} \right)$: the second deepest subtree has depth $\psi_{\mathcal{G}_t} - 2$ at time $t$, but neither the deepest nor the second deepest trees grow in depth at time $t+1$. 
\end{enumerate}
It can be verified that in all the above cases, the Jordan center will not move from time $t$ to $t+1$. For example, in case (1) the deepest and second deepest subtrees have the same depth at time $t$. Note that under all the growth models we consider in this paper, namely the IC, and the discrete SI models, the depth of either subtree can increase by at most $1$ at time $t+ 1$. Thus, no node can have a strictly lower centrality than the Jordan center at time $t$ and therefore, the center will not move at time $t+1$. 
\ignore{
Now, for all other cases, other than the one mentioned in the lemma, the Jordan center does not shift from time $t$ to $t+1$. The other cases are:

\begin{enumerate}
\item \subtreeJordanTime{t}{t}{2} has a depth $\psi_{\mathcal{G}_t} - 1$ at time $t$: In this case the Jordan center does not shift at time $t+1$, irrespective of which nodes get infected.
\item \subtreeJordanTime{t}{t}{2} has a depth $\psi_{\mathcal{G}_t} - 2$ at time $t$, but at time step $t+1$, the depth of this tree grows: In this case, the Jordan center does not shift irrespective of what happens to the neighboring nodes of any other subtree.
\item \subtreeJordanTime{t}{t}{2} has a depth $\psi_{\mathcal{G}_t} - 2$, but at time step $t+1$, the depth of \subtreeJordanTime{t+1}{t}{1} and \subtreeJordanTime{t+1}{t}{2} remains the same as that at time $t$: In this case, the Jordan center does not change irrespective of what happens to the other nodes in the tree.
\end{enumerate}
}

Next, we will argue that if the center moves at time $t+1$, it can only move in the direction of the deepest subtree to the neighbor \neighborJordan{t}{1}. See Figure~\ref{fig:lemma_pre_2}. Note that from time $t$ to $t+1$, the centrality of any node increases by at most $1$. Thus, it is evident that the Jordan center, if it shifts at time $t+1$, will move to one of the neighboring vertices $\left\{\text{ \neighborJordan{t}{i} }\right\}_{i \in [1:d+1]}$. 

Now say the Jordan center moves at time $t+1$, see Figure~\ref{fig:lemma_pre_2}. From the above argument, this implies that the second deepest subtree has a depth $\psi_{\mathcal{G}_t} - 2$ at time $t$, and at time step $t+1$, the depth of the second deepest subtree does not grow, while the deepest subtree grows by $1$.   For any $i \neq 1$, the centrality of node \neighborJordan{t}{i} at time $t+1$ will be more than the centrality of the original Jordan center \jordan{t} and therefore it cannot be the new Jordan center. It can be seen that if the Jordan center moves to node \neighborJordan{t}{1}, the centrality will reduce by $1$ as compared to the centrality of the original Jordan center \jordan{t}. This shows that the Jordan center will move here at time $t+1$.
\end{proof}

The above lemma gives rise to the following corollaries.
\begin{corollary}
\label{cor:center_change}
If at least two of the deepest subtrees rooted at the neighbors of the Jordan center grow in depth from time $t$ to $t+1$, the center will not change in this time step.
\end{corollary}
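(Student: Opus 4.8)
The plan is to obtain this corollary as an immediate consequence of the ``only if'' direction of Lemma~\ref{lemma:dir_of_movement}, read in contrapositive form. Lemma~\ref{lemma:dir_of_movement} pins down the unique scenario in which the center can shift from time $t$ to $t+1$: the second deepest subtree must have depth $\psi_{\mathcal{G}_t}-2$ at time $t$, this subtree must \emph{not} grow at time $t+1$, and the deepest subtree must grow by exactly $1$. The single fact I would extract is the necessary condition buried in this statement, namely that for the center to move, the second deepest subtree \subtreeJordanTime{t+1}{t}{2} cannot increase in depth.

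Next I would translate the hypothesis of the corollary into this language. Ordering the neighbors of \jordan{t} by the depth of their subtrees according to the convention fixed above, the phrase ``at least two of the deepest subtrees grow in depth'' means that both the deepest subtree, rooted at \neighborJordan{t}{1}, and the second deepest subtree, rooted at \neighborJordan{t}{2}, increase in depth from time $t$ to $t+1$. In particular the second deepest subtree grows, so that the depth of \subtreeJordanTime{t+1}{t}{2} is at least $\psi_{\mathcal{G}_t}-1$, strictly larger than the value $\psi_{\mathcal{G}_t}-2$ required for movement. By the contrapositive of Lemma~\ref{lemma:dir_of_movement}, the Jordan center does not change in this time step, which is exactly the claim.

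The only point that needs a little care is the presence of ties at the top. If two or more neighboring subtrees already share the maximal depth $\psi_{\mathcal{G}_t}-1$ at time $t$, then by Lemma~\ref{lemma:3_two_subtrees} the second deepest subtree has depth $\psi_{\mathcal{G}_t}-1$ rather than $\psi_{\mathcal{G}_t}-2$, and the center cannot move for \emph{any} pattern of growth; this is precisely case (1) in the proof of Lemma~\ref{lemma:dir_of_movement}. I would dispatch this case first, so that in the remaining case the deepest subtree is unique and the hypothesis unambiguously forces growth of the second deepest subtree. Since all of the centrality bookkeeping has already been carried out in Lemma~\ref{lemma:dir_of_movement}, I do not anticipate any genuine obstacle; the entire content of the corollary is the observation that two simultaneous deep-subtree growths are incompatible with the single growth pattern that the lemma permits.
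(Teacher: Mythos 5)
Your proof is correct and takes essentially the same route as the paper: the paper gives no separate argument for this corollary, presenting it as an immediate consequence of Lemma~\ref{lemma:dir_of_movement}, which is exactly the contrapositive reading of the ``only if'' direction that you spell out (growth of the second deepest subtree violates the unique movement scenario). Your separate handling of ties at the maximal depth is a sensible precaution, and it agrees with case (1) in the paper's proof of that lemma, so it introduces no divergence in approach.
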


\begin{corollary}
\label{cor:centrality_change}
If the Jordan center shifts from time $t$ to $t+1$, we have
\begin{align*}
\psi_{\mathcal{G}_t} = \psi_{\mathcal{G}_{t+1}}.
\end{align*}
\end{corollary}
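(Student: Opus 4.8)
The plan is to read off the precise configuration forced by Lemma~\ref{lemma:dir_of_movement} and then directly evaluate the centrality of the new center. Because the corollary hypothesizes that the center shifts from time $t$ to $t+1$, Lemma~\ref{lemma:dir_of_movement} supplies three facts simultaneously: the deepest subtree grows from depth $\psi_{\mathcal{G}_t}-1$ to depth $\psi_{\mathcal{G}_t}$, the second deepest subtree remains at depth $\psi_{\mathcal{G}_t}-2$, and the Jordan center at time $t+1$ is the neighbor \neighborJordan{t}{1}. Hence the entire task collapses to computing the centrality of \neighborJordan{t}{1} in $\mathcal{G}_{t+1}$ and checking that it equals $\psi_{\mathcal{G}_t}$.

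First I would re-root $\mathcal{G}_{t+1}$ at \neighborJordan{t}{1} and measure the maximum distance from it along the only two relevant branches. The downward branch is the subtree of \neighborJordan{t}{1} pointing away from \jordan{t}, namely \subtreeJordanTime{t+1}{t}{1}; by the growth condition its depth is $\psi_{\mathcal{G}_t}$, so the farthest node in this direction lies at distance $\psi_{\mathcal{G}_t}$. The upward branch passes through the edge to \jordan{t}: from \jordan{t} the deepest subtree not containing \neighborJordan{t}{1} is the second deepest subtree, of depth $\psi_{\mathcal{G}_t}-2$, so the farthest node this way lies at distance $1 + 1 + (\psi_{\mathcal{G}_t}-2) = \psi_{\mathcal{G}_t}$. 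Taking the maximum of the two branches yields centrality exactly $\psi_{\mathcal{G}_t}$ for \neighborJordan{t}{1}.

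Since Lemma~\ref{lemma:dir_of_movement} already certifies that \neighborJordan{t}{1} is the Jordan center of $\mathcal{G}_{t+1}$, this computation gives $\psi_{\mathcal{G}_{t+1}} = \psi_{\mathcal{G}_t}$ directly, which is the claim. As an internal consistency check I would also note that the old center \jordan{t} now has centrality $\psi_{\mathcal{G}_t}+1$ at time $t+1$, since its deepest branch has lengthened to $\psi_{\mathcal{G}_t}+1$; this matches the definition of the center ``changing'', where the new center's centrality is strictly smaller (by exactly one) than that of the old center at time $t+1$.

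I do not anticipate a genuine obstacle, as the statement is essentially bookkeeping on top of Lemma~\ref{lemma:dir_of_movement}. The only step needing care is the re-rooting: one must verify that when \neighborJordan{t}{1} is taken as the root, the upward branch is counted correctly (one edge to \jordan{t}, then the deepest of \jordan{t}'s remaining subtrees), and that no longer path has been missed --- which holds because the only subtree of \jordan{t} that could have been deeper, namely that of \neighborJordan{t}{1} itself, is precisely the downward branch already accounted for.
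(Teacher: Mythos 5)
Your proof is correct and takes essentially the same route the paper intends: the paper states this corollary as an immediate consequence of Lemma~\ref{lemma:dir_of_movement}, whose own proof already contains your computation --- the new center $n^{1}_{v^*_{t}}$ has eccentricity $\max\left(\psi_{\mathcal{G}_t},\, 2 + (\psi_{\mathcal{G}_t}-2)\right) = \psi_{\mathcal{G}_t}$, one less than the old center's eccentricity $\psi_{\mathcal{G}_t}+1$ at time $t+1$. Your write-up simply makes explicit the branch-by-branch bookkeeping that the paper leaves implicit, so there is no substantive difference.
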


\section{Jordan center in the IC model}
\label{sec:Jordan_IC}
Recall the IC model from Definition~\ref{def:IC_Model}. In this section, we will consider the scenario where the infection starts from the root node $u^{\star}$ of a $d+1$-regular tree $\mathcal{G}$ and then spreads along the edges according to the IC model. For the sequence of infected trees $\{ \mathcal{G}_t, \ t \geq 0 \}$ growing according to the IC model, it is easy to see that all new vertices added at timestep $t$ will be at distance $t$ from the root node $u^{\star}$. This implies that as long as the infected tree is growing, the centrality of $u^{\star}$ at any time $t$ is $\psi_{\mathcal{G}_{t}}(u^{\star}) = t$. Also, we will say that a tree (or subtree) $\mathcal{T}$ is  `dead' at time $t_0$ if no new vertices are added to it at $t_0$, which implies the same for all $t > t_0$ as well because of the nature of the IC model. We begin with a few preliminary lemmas for the IC model, before presenting the main theorem of this section.
\begin{figure}[th]
\begin{center}
\includegraphics[angle=0,scale=0.4]{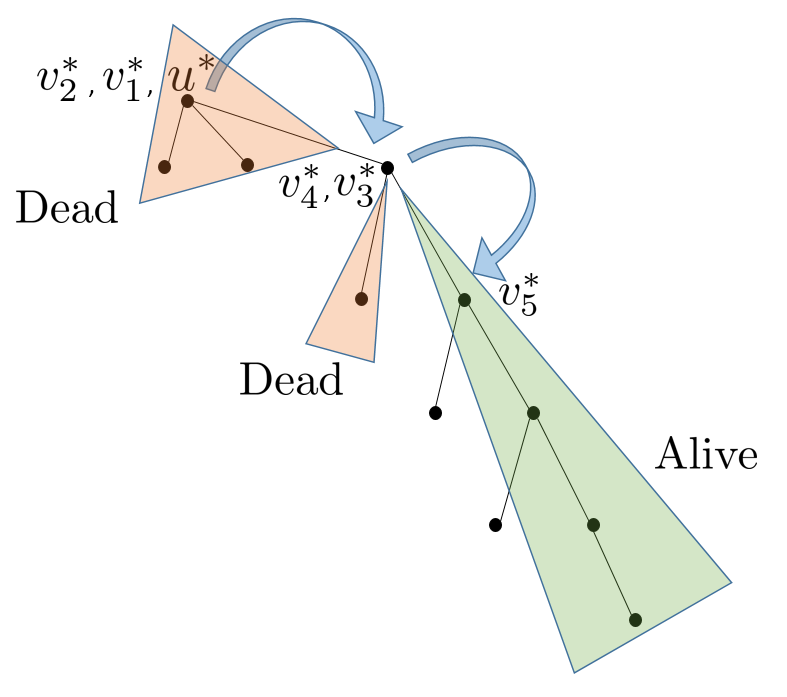}
\caption{The Jordan center changes from $u^*$ (which is also \jordan{1} and \jordan{2}) to \jordan{3} from time $2$ to $3$ and from \jordan{4} (which is the same as \jordan{3}) to \jordan{5} from timestep $4$ to $5$. Note that the center changes from \jordan{4} to \jordan{5} since all except one subtree of \jordan{4} are dead. Also note that \jordan{4} got infected at timestep $1$ and \jordan{5} got infected at timestep $2$.}
\label{fig:Jordan_IC}
\end{center}
\end{figure}
\begin{lemma}
\label{lemma:center_change_IC}
The Jordan center in the IC model changes from time $t$ to time $t+1$ if all except one of the subtrees rooted at the neighbors of the Jordan center \jordan{t} at time $t$ die, i.e., \subtreeJordanTime{t}{t}{i} dies for all $i \geq 2$. 
\end{lemma}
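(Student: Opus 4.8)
The plan is to derive this as a consequence of Lemma~\ref{lemma:dir_of_movement}, which already pins down exactly when the center moves. By that lemma it suffices to check, at the center $v^*_t$, that the deepest subtree grows from depth $\psi_{\mathcal{G}_t}-1$ to $\psi_{\mathcal{G}_t}$ while the second-deepest subtree stays at depth $\psi_{\mathcal{G}_t}-2$; the lemma then also identifies the new center as $n^1_{v^*_t}$.

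The key tool I would assemble first is the depth bookkeeping specific to the IC model. Since the infection advances exactly one hop per step and each vertex infects only at the step immediately after it is itself infected, a subtree rooted at a vertex $w$ that was infected at time $s_w$ has depth exactly $t-s_w$ if it is alive at time $t$ (that is, it grew at step $t$), and depth at most $t-1-s_w$ if it is dead at time $t$. I would also record that every neighbor of $v^*_t$ is infected either at time $s_v-1$, which happens for the unique parent pointing back toward the root, or at time $s_v+1$, which happens for each child, where $s_v$ denotes the infection time of $v^*_t$.

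With this in place, the dead subtrees cannot grow from $t$ to $t+1$, so the depth of the second-deepest subtree is unchanged; and since exactly one subtree survives and the infected tree is still growing, that survivor is alive and its depth increases by one. I would then argue the survivor is the deepest subtree and is rooted at a child: if instead it were the parent-direction subtree, being alive of depth $\psi_{\mathcal{G}_t}-1$ it would force $s_v=t-\psi_{\mathcal{G}_t}+2$, whence every dead child-subtree (infected at $s_v+1$) would have depth at most $\psi_{\mathcal{G}_t}-4$, so the second-deepest subtree would have depth at most $\psi_{\mathcal{G}_t}-4$, contradicting Lemma~\ref{lemma:3_two_subtrees}. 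Hence the survivor is a child, infected at $s_v+1$, of depth $\psi_{\mathcal{G}_t}-1$ growing to $\psi_{\mathcal{G}_t}$, and every other child-subtree, being dead and infected at $s_v+1$, has depth at most $\psi_{\mathcal{G}_t}-2$.

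The main obstacle is to conclude that the second-deepest subtree has depth exactly $\psi_{\mathcal{G}_t}-2$, which is the one remaining hypothesis Lemma~\ref{lemma:dir_of_movement} needs. The child-subtrees are controlled cleanly by the bookkeeping above, but the subtree pointing back toward the root is only bounded by $\psi_{\mathcal{G}_t}$, so the delicate point is to exclude a dead root-direction subtree whose depth is still $\psi_{\mathcal{G}_t}-1$; this has to be handled using the distance of $v^*_t$ from the root together with the centrality-minimality of $v^*_t$. Once the second-deepest depth is fixed at $\psi_{\mathcal{G}_t}-2$, Lemma~\ref{lemma:dir_of_movement} applies verbatim and gives that the center moves, with the new center at $n^1_{v^*_t}$.
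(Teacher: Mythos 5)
Your proposal runs in the opposite direction from the paper's proof, and the direction matters. Although the lemma is worded with ``if'', the paper's own proof establishes the implication \emph{the center moves from $t$ to $t+1$ $\Rightarrow$ every subtree rooted at a neighbor of $v^*_t$ other than the deepest one is dead}, and this necessity is exactly what Theorem~\ref{Thm:JCICP} consumes: there, the event that all but one subtree die (and the surviving neighbor gets infected) is explicitly called ``a necessary condition'' for a move, which is what permits bounding $\mathbb{P}(|A|\ge i)$ by the probability that this event occurs $i-1$ times. The paper proves necessity by induction along the sequence of successive Jordan centers, with the inductive invariant that the subtree of the current center pointing back toward the root (equivalently, containing all previous centers) is already dead; granted this invariant, the deepest (growing) subtree must be rooted at a child, all alive child subtrees in the IC model have equal depth, and so ``the second-deepest does not grow'' forces every subtree other than the deepest to be dead. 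Your converse implication, even if completed, would not feed that downstream argument.

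Moreover, the gap you flag in your own sketch is not a technicality that ``centrality-minimality of $v^*_t$'' can repair; it is the crux, and no static argument about $\mathcal{G}_t$ alone can close it. In the configuration you need to exclude --- the parent-direction subtree of $v^*_t$ dead with depth $\psi_{\mathcal{G}_t}-1$, the unique surviving child subtree alive with depth $\psi_{\mathcal{G}_t}-2$ --- the vertex $v^*_t$ and its parent are tied, both with eccentricity exactly $\psi_{\mathcal{G}_t}$ (the parent is one hop closer to the deep dead side and one hop farther from the $v^*_t$-side), so $v^*_t$ lies in the argmin and minimality yields no contradiction. What rules this configuration out is history, i.e.\ precisely the paper's induction: at the moment a vertex first becomes the center, its dead parent-direction subtree is tied in depth with its alive child subtrees, and thereafter the alive children are strictly deeper. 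Two further slips: (i) in the IC model, ``alive at time $t$'' does not imply the surviving subtree grows at $t+1$ --- it may die at exactly that step, in which case the center never moves at all --- so your claim that the survivor's depth increases by one needs an added hypothesis; (ii) the bookkeeping ``depth $=t-s_w$ when alive'' is valid only for child (descendant) subtrees, while for the parent-direction subtree one only gets depth $\ge t-s_w$ (it can be far deeper via vertices across the root $u^\star$), so your ``forces $s_v=t-\psi_{\mathcal{G}_t}+2$'' should read $s_v\ge t-\psi_{\mathcal{G}_t}+2$; that particular contradiction does survive the correction, but the central hole above does not.
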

\begin{proof}
Consider the infected tree at time $0$, when only the root node $u^{\star}$ is infected. Trivially, $u^{\star}$ is the Jordan center at time $0$ denoted by $v^*_0$. Say the Jordan center shifts from the root at time $t_0 + 1$, for some $t_0 \ge 0$. From Lemma \ref{lemma:dir_of_movement}, the subtrees \subtreeJordanTime{t_0}{0}{i} must have depth $\leq (\psi_{\mathcal{G}_{t_0}}-2) = t_0 - 2$,  $\forall i \geq 2$ . From the definition of the IC model in Definition \ref{def:IC_Model}, this means that all the subtrees \subtreeJordanTime{t_0}{0}{i} for $i \geq 2$ are dead by time $t_0 - 1$. This proves the lemma for the first Jordan center i.e. the root node $u^{\star}$.

Assume the statement of the lemma is true for the first $n$ Jordan centers. Consider the $(n+1)^{th}$ Jordan center and say this center moves from time $t_n$ to $t_n + 1$. At time $t_n$, the subtrees for this Jordan center \jordan{t_n} are given by \subtreeJordanTime{t_n}{t_n}{i}. Again, using Lemma \ref{lemma:dir_of_movement}, we have that the center will change at time $t_n + 1$ if the deepest subtree grows by $1$, and the second deepest tree(s) does not grow, which for the IC model means that it is dead. For some $c \in [1:d+1]$, let \neighborJordan{t_n}{c} be the parent of \jordan{t_n} in the original tree $(\mathcal{G}, u^{\star})$ and thus, also the previous Jordan center since the center moves at most one hop at a time. Then from the induction hypothesis, we have that \subtreeJordanTime{t_n}{t_n}{c} is dead, see Figure~\ref{fig:Jordan_IC} for an illustration. Since the deepest subtree must have grown from time $t_n$ to $t_n + 1$, this implies that $c \neq 1$ and \subtreeJordanTime{t_n}{t_n}{c} is not the deepest subtree. Next, from the properties of the IC model, all the other growing subtrees amongst \subtreeJordanTime{t_n}{t_n}{i}, $i \neq c$ have the same depth. Since we know that the second deepest subtree did not grow from time $t_n$ to $t_n + 1$, this implies  that for the center to move, all subtrees \subtreeJordanTime{t}{t}{i}, $i \geq 2$ must be dead. This completes the proof.
\end{proof}

\begin{lemma}
\label{lemma:No_back}
If the Jordan center changes from time $t$ to time $t+1$, then the time of infection of \jordan{t} precedes that of \jordan{t+1}.
\end{lemma}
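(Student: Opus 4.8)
The plan is to show that whenever the center shifts it moves strictly \emph{away} from the root $u^{\star}$, i.e.\ to a child of \jordan{t} in the originally rooted tree $(\mathcal{G},u^{\star})$; since in a growing tree every node is infected strictly after its parent, this immediately yields that \jordan{t} is infected before \jordan{t+1}. So the whole task reduces to locating the new center relative to the root.

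First I would invoke Lemma~\ref{lemma:dir_of_movement}: if the center moves from $t$ to $t+1$, then the new center \jordan{t+1} is the neighbor \neighborJordan{t}{1} whose subtree is deepest, and this deepest subtree is the \emph{only} subtree of \jordan{t} that grows in depth from $t$ to $t+1$. Indeed, the second-deepest subtree does not grow by Lemma~\ref{lemma:dir_of_movement}, and by Lemma~\ref{lemma:center_change_IC} every subtree \subtreeJordanTime{t}{t}{i} with $i\ge 2$ is in fact dead. Thus I must only rule out the possibility that \neighborJordan{t}{1} is the \emph{parent} of \jordan{t}, i.e.\ the neighbor lying on the path from \jordan{t} back toward $u^{\star}$.

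The key step is to argue that the subtree hanging off \jordan{t} in the direction of $u^{\star}$ (the parent subtree) is dead at time $t$. This is precisely what the inductive argument in the proof of Lemma~\ref{lemma:center_change_IC} establishes: tracing back along the sequence of earlier centers, the direction from which the center arrived at \jordan{t} corresponds to a subtree that has already died. A dead subtree adds no new vertices, so it cannot be the one subtree that grows from $t$ to $t+1$; but \neighborJordan{t}{1} is exactly that growing subtree. Hence \neighborJordan{t}{1} is not the parent, and must therefore be a child of \jordan{t}.

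Finally, since \jordan{t+1} coincides with \neighborJordan{t}{1}, which is a child of \jordan{t} in $(\mathcal{G},u^{\star})$, and infection propagates from parent to child, \jordan{t} was infected at an earlier timestep than \jordan{t+1}, which is the claim. I expect the main obstacle to be making the assertion ``the parent subtree is dead'' precise and self-contained without merely quoting the internals of the previous proof: one has to handle the base case (the first shift, away from the root $u^{\star}$ itself) and carry the bookkeeping that the incoming direction is always one of the dead $i\ge 2$ subtrees and never the surviving deepest one. This is cleanest to package as a short induction on the successive centers, paralleling the induction already used in Lemma~\ref{lemma:center_change_IC}.
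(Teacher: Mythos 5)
Your proposal is correct and follows essentially the same route as the paper's own proof: both invoke Lemma~\ref{lemma:dir_of_movement} to place the new center at \neighborJordan{t}{1}, then use the argument inside the proof of Lemma~\ref{lemma:center_change_IC} to conclude that the parent-direction subtree of \jordan{t} is dead and hence cannot be the growing deepest subtree, so the new center is a child of \jordan{t} and therefore infected later. The only difference is cosmetic: the paper states the last step as the IC-model fact that every non-parent neighbor of \jordan{t} is infected exactly one timestep after it, while you phrase it as the general parent-before-child property of growing trees.
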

\begin{proof}
From Lemma \ref{lemma:center_change_IC}, we know that the Jordan center will move at time $t$ only if the deepest subtree grows, and all other subtrees are dead. From Lemma \ref{lemma:dir_of_movement}, in this case the new center will move in the direction of the deepest subtree to \neighborJordan{t}{1}. 
\ignore{
This implies.
\begin{align}
\text{depth(\subtreeJordanTime{t}{t}{1})} &= \psi_{\mathcal{G}_t} \\
\text{depth(\subtreeJordanTime{t+1}{t}{1})} &= \psi_{\mathcal{G}_t} + 1
\end{align}
}
For some $c \in [1:d+1]$, let \neighborJordan{t}{c} be the parent of \jordan{t} in the original tree $(\mathcal{G}, u^{\star})$. As argued in the proof of Lemma \ref{lemma:center_change_IC} above, the subtree \subtreeJordanTime{t}{t}{c} is dead at time $t$. Since the deepest subtree must have grown from time $t$ to $t + 1$, this implies that \subtreeJordanTime{t}{t}{c} is not the deepest subtree and thus the next Jordan center cannot be \neighborJordan{t}{c}. Finally, it is easy to see from the properties of the IC model that the time of infection of all the other neighbors of \jordan{t} is one more than \jordan{t}. This completes  the proof.
\end{proof}

\noindent The above lemma gives rise to the following corollary. 
\begin{corollary}
\label{cor:old_Jordan}
A node $v$ which was the Jordan center at time $t_0$, but shifted at time $t_0+1$, can never be the Jordan center for $t \geq t_0+1$.
\end{corollary}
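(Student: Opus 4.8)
The plan is to derive the corollary directly from Lemma~\ref{lemma:No_back} by tracking the sequence of \emph{distinct} positions that the Jordan center occupies over time and showing that their infection times are strictly increasing, so that no position is ever revisited.

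First I would record the trajectory of the center as a sequence of distinct vertices $c_0 = u^\star, c_1, c_2, \ldots$, where $c_{k+1}$ is the vertex to which the center jumps the $(k+1)$-th time it actually moves (the center may remain stationary for several timesteps between consecutive jumps, but each jump changes the center by definition). Each consecutive pair $(c_k, c_{k+1})$ corresponds to a single timestep at which the center shifts, so Lemma~\ref{lemma:No_back} applies to it and yields $\tau(c_k) < \tau(c_{k+1})$, where $\tau(\cdot)$ denotes the time of infection. Chaining these inequalities along the whole trajectory gives $\tau(c_0) < \tau(c_1) < \tau(c_2) < \cdots$, a strictly increasing sequence.

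The corollary then follows immediately: strict monotonicity of $\tau$ along the trajectory forces the $c_k$ to be pairwise distinct, so once the center has left a vertex it can never return to it. Concretely, if $v = c_k$ was the center at time $t_0$ and shifted at $t_0+1$, then for every $t \ge t_0+1$ the center sits at some $c_j$ with $j \ge k+1$, whence $\tau(c_j) > \tau(c_k) = \tau(v)$ and therefore $c_j \neq v$; thus $v$ is never the Jordan center again.

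I expect the only delicate points to be bookkeeping rather than mathematical. One is making precise that Lemma~\ref{lemma:No_back}, which is phrased for a single step $t \to t+1$ at which the center moves, chains correctly across the stationary stretches to produce a globally monotone sequence; this is immediate once the trajectory is indexed by moves rather than by timesteps. The other is clarifying that ``being the Jordan center'' refers to the tracked center $v^*_t$ under the movement convention fixed in Section~\ref{sec:prelim_results}. If one instead wants the stronger claim that $v \notin \operatorname{argmin}_{u} \psi_{\mathcal{G}_t}(u)$ for all $t \ge t_0+1$, I would supplement the argument with a centrality-gap estimate using Lemmas~\ref{lemma:center_change_IC} and~\ref{lemma:dir_of_movement}: since all subtrees of $v$ except the one toward $w := v^*_{t_0+1}$ are dead at $t_0$ and remain dead, one checks that $\psi_{\mathcal{G}_t}(v) = 1 + \mathrm{depth}((\mathcal{G}_t,v)_w)$ strictly exceeds $\psi_{\mathcal{G}_t}(w)$ for all $t \ge t_0+1$, so $v$ is strictly less central than its live neighbour at every later time.
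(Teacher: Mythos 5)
Your proposal is correct and matches the paper's intended argument: the paper states this corollary as an immediate consequence of Lemma~\ref{lemma:No_back}, and the implicit reasoning is exactly your chaining of strictly increasing infection times along the sequence of distinct center positions, which rules out any revisit. The bookkeeping points you flag (indexing by moves rather than timesteps, and the movement convention from Section~\ref{sec:prelim_results}) are handled the same way, so no further argument is needed.
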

We now present our main result for the Jordan center under the IC model. 
\begin{theorem}
\label{Thm:JCICP}
The Jordan center of an infected subtree growing according to the IC model on an underlying $d+1$-regular tree is persistent with probability 1.
\end{theorem}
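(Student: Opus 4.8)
The plan is to show that the Jordan center can only move finitely many times, by combining the structural lemmas already established with the supercritical branching behavior of the IC model ($pd > 1$). The key insight from Lemma~\ref{lemma:center_change_IC} is that for the center to shift from \jordan{t} to \jordan{t+1}, \emph{all} subtrees rooted at the neighbors of \jordan{t} except one must be dead, and the surviving one must keep growing. So each shift consumes, in a sense, the possibility of future branching in all but one direction. The strategy is to argue that with probability $1$, the infection eventually survives along at least two distinct branches at some persistent vertex, after which no further shift is possible.

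First I would set up the branching structure. Under the IC model, the infected subtree is exactly a Galton--Watson tree where each infected node independently infects each of its $d$ children (or $d+1$ for the root) with probability $p$; since $pd > 1$, this process is supercritical and survives forever with positive probability $q > 0$. The subtree \subtree{(\mathcal{G},v)}{\text{\neighbor{v}{i}}} rooted at each neighbor is itself an independent copy of such a branching process. By Corollary~\ref{cor:old_Jordan}, a vertex that has ceased to be the center can never return, so the center performs a monotone walk moving only toward deeper surviving subtrees; combined with Lemma~\ref{lemma:No_back}, the sequence of centers $\{\text{\jordan{t}}\}$ traces a non-backtracking path away from the root through successively later-infected vertices.

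Next I would localize the persistence argument to a single vertex. Consider the path of Jordan centers. At each center \jordan{t} that eventually shifts, Lemma~\ref{lemma:center_change_IC} forces all but one neighbor-subtree to be dead. Thus the event ``the center shifts infinitely often'' requires, at every vertex $v$ along the center path, that at least $d$ of the $d+1$ subtrees hanging off $v$ die out while exactly one survives and remains deepest. I would bound the probability of this infinite coincidence: at each step the surviving branch must be the \emph{unique} living one, but conditioned on survival, with positive probability at least two of the $d \ge 2$ independent child-subtrees of a given vertex survive forever (this has probability $1 - (\text{prob. at most one survives})$, which is strictly positive since $q>0$ and $d \ge 2$). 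Once two subtrees at a persistent center both survive, the deepest and second-deepest both grow infinitely often, so by Corollary~\ref{cor:center_change} (or directly Lemma~\ref{lemma:dir_of_movement}) the center can never move again. Formally, I would show the center path is almost surely finite: either the whole process dies (in which case the tree is finite and persistence is trivial), or conditioned on unbounded growth, the center reaches a vertex from which two or more branches survive, which happens eventually with probability $1$ by a Borel--Cantelli-type argument over the independent branching events along the path.

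\textbf{The hard part} will be handling the conditioning and the ``second-deepest'' depth bookkeeping carefully. The subtle point is that survival of two branches is not quite enough---Lemma~\ref{lemma:dir_of_movement} requires the second-deepest subtree to not grow \emph{while} the deepest grows by exactly $1$, so I must argue that two simultaneously-surviving branches keep growing in lockstep often enough that the ``deepest grows, second stays'' pattern cannot recur indefinitely. Because in the IC model all live branches advance their frontier by exactly one level per timestep whenever they are still growing, two surviving subtrees of equal or near-equal depth will both increase their depth at essentially every step, directly violating the shift condition of Lemma~\ref{lemma:dir_of_movement}. The cleanest route is therefore: condition on the survival set, show that almost surely some center vertex has at least two surviving neighbor-subtrees, and then invoke Corollary~\ref{cor:center_change} to freeze the center from that point onward, yielding a finite $t_0$ with probability $1$.
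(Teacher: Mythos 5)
Your proposal is correct and follows essentially the same route as the paper's proof: both rest on Lemma~\ref{lemma:center_change_IC} (a shift requires all but one neighbor-subtree to die), Corollary~\ref{cor:old_Jordan} (no center is ever revisited), the supercriticality of the per-vertex Galton--Watson subtrees under $pd>1$, independence of the branching events across successive centers, and a Borel--Cantelli argument forcing the number of shifts to be finite almost surely. The only cosmetic difference is that the paper bounds the probability of at least $i$ shifts by a geometric sequence using the necessary ``all-but-one-dead'' event $\mathcal{F}$, whereas you work with its complement (a positive per-center probability that two neighbor-subtrees survive, which freezes the center via Corollary~\ref{cor:center_change}); these are two phrasings of the same argument.
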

\begin{proof}
Let the set $A$ be defined as follows:
\begin{align*}
A = \{ v : v = \text{\jordan{t}} \text{ for some time } t \}
\end{align*}
i.e. $A$ consists of all nodes that have been the Jordan center of the graph at some point of time. We will show that the size of the set $A$ is finite with probability $1$. This, along with Corollary \ref{cor:old_Jordan} will complete the proof.

Consider any node $v$ which gets infected, say at time $t_0 \ge 1$. Recall that $G_t$ denotes the infected subgraph at time $t$ and for any $t \ge t_0$, \subtree{(\mathcal{G}_t,u^{\star})}{v} denotes the subtree rooted at $v$ and consisting of $v$ and the infected descendants of $v$. It is easy to see that $\left\{\text{\subtree{(\mathcal{G}_t,u^{\star})}{v}}\right\}_{t \ge t_0}$ forms a Galton-Watson (GW) branching process \cite{Dawson}, which grows independent of the rest of the infection tree. A GW branching process is defined as follows. Let $\mathcal{Z}_0$ denote the number of nodes at time $0$. Then, the number of nodes at time $n$ is given by:
\begin{align*}
\mathcal{Z}_n = \sum_{i=1}^{\mathcal{Z}_{n-1}} \xi_i
\end{align*}
where $\xi_i's$ are i.i.d. random variables denoting the number of children spawned by each node in the previous generation. For our setup, $\mathcal{Z}_0 = 1$ and $\xi_i \sim bin(p,d)$. We say that a branching process is `dead' if  $ \exists$ $n_0$ such that $\mathcal{Z}_{n} = 0 \: \forall n \geq n_0$. For $pd > 1$, there is a positive probability, say $\bar{p}$, that the GW branching process will not die \cite{Dawson}. We  assume $pd > 1$ which implies that for any infected node $v$, there is a positive probability $\bar{p}$ that the rooted growing subtree $\left\{\text{\subtree{(\mathcal{G}_t,u^{\star})}{v}}\right\}_{t \ge t_0}$ will not die and grow forever. 

For any $i \ge 2$, let $\mathcal{E}_i$ be the  event $\mathcal{E}_i := \{ |A| \geq i \}$. Lemma \ref{lemma:center_change_IC} shows that the Jordan center \jordan{t} moves at the next step only when all except one of the subtrees $\left\{\text{\subtreeJordanTime{t}{t}{i}}\right\}_{i\in[1:d+1]}$ rooted at its neighbors are dead and the other subtree survives. Instead, consider the event $\mathcal{F}$ that for the Jordan center \jordan{t} all except one of the subtrees $\left\{\text{\subtreeJordanTime{t}{t}{i}}\right\}_{i\in[1:d+1]}$ rooted at its neighbors are dead and the remaining neighbor gets infected. For example, in Figure \ref{fig:Jordan_IC} all except one rooted subtrees of \jordan{4} die and the remaining neighbor gets infected, which in fact becomes the Jordan center  \jordan{5}. Note that  occurrence of the  event $\mathcal{F}$ is a necessary condition for the Jordan center \jordan{t} to move. Let the probability of this event be $q_0$ when the root node is the Jordan center, and $q$ when the center resides elsewhere. Since $\bar{p} > 0$ and $0 <p < 1$, we can see that $0<q_0, q<1$. Let 
$$
\mathcal{Q}_i := \{ \mbox{event $\mathcal{F}$ happens  at least $i-1$ times} \}.
$$ 
As can be seen from Figure \ref{fig:Jordan_IC} and Definition \ref{def:IC_Model}, the events $\mathcal{F}$ are independent across different Jordan centers, and thus we have
\begin{align*}
\mathbb{P}(\mathcal{Q}_i) = q_0 \times q^{i-2}  . 
\end{align*}
Also, we see that the event $\mathcal{E}_i$ happens only if $\mathcal{Q}_i$ happens, thereby showing that $\mathbb{P}(\mathcal{E}_i) \leq \mathbb{P}(\mathcal{Q}_i)$. Now,
$\sum_{i = 1}^{\infty} \mathbb{P}(\mathcal{E}_i) \leq \sum_{i = 1}^{\infty} \mathbb{P}(\mathcal{Q}_i) < \infty$ and thus from the Borel-Cantelli lemma, we have that with probability $1$, only finitely many of these events can happen. This shows that the size of the set $A$ is finite with probability $1$ and together with  Corollary \ref{cor:old_Jordan}, completes the proof of the theorem. 
\end{proof}
\begin{corollary}
\label{cor:JordanRootDistance}
The distance between the root of a $d+1$-regular tree and the Jordan center of an infected subtree growing according to the IC model on the underlying regular tree is finite.
\end{corollary}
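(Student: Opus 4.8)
The plan is to deduce this directly from Theorem~\ref{Thm:JCICP}, so the argument is essentially a short deduction rather than a fresh calculation. That theorem establishes that the set $A$ of all nodes which ever serve as the Jordan center is finite with probability $1$. First I would observe that at every time $t$ the Jordan center \jordan{t} is, by definition, an element of $A$, and that in any tree every node lies at a finite distance from the root $u^{\star}$. Consequently, with probability $1$,
\begin{align*}
\sup_{t \ge 0} \; d\big(u^{\star}, \text{\jordan{t}}\big) \; = \; \max_{v \in A} \; d(u^{\star}, v),
\end{align*}
where $d(\cdot,\cdot)$ denotes graph distance, and the right-hand side is a maximum of finitely many finite quantities, hence finite. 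This already yields the stated claim.

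To obtain the sharper and more informative statement that this distance equals the number of times the center has moved, I would feed in the preliminary lemmas. By Lemma~\ref{lemma:dir_of_movement} the center shifts by at most one hop whenever it changes, and when it does change it moves to the neighbor \neighborJordan{t}{1} rooting the deepest subtree. Lemma~\ref{lemma:No_back} together with Corollary~\ref{cor:old_Jordan} shows that this neighbor is never the parent of \jordan{t} (the parent's subtree is already dead at the moment of the move) and that the center never returns to a previously occupied node; hence each move strictly increases the distance to $u^{\star}$ by exactly $1$. Starting from \jordan{0} $=u^{\star}$ at distance $0$, the distance from the root to the center after $k$ moves is then exactly $k$, so the final distance is $|A|-1$, which is finite with probability $1$ by the theorem.

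There is essentially no substantive obstacle here: the analytic content, namely the Borel--Cantelli argument bounding $|A|$, has already been carried out in the proof of Theorem~\ref{Thm:JCICP}. The only point that genuinely needs care is verifying that the center always moves \emph{away} from the root, so that the successive one-hop moves accumulate into the distance rather than cancelling one another; this is exactly what Lemma~\ref{lemma:No_back} guarantees, and it is what upgrades the qualitative finiteness of $A$ into the explicit bound $|A|-1$ on the root-to-center distance.
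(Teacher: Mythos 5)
Your proposal is correct and takes essentially the same route as the paper: the corollary is an immediate deduction from Theorem~\ref{Thm:JCICP} together with the movement lemmas (the paper cites the one-hop property from Lemma~\ref{lemma:dir_of_movement}, while you invoke finiteness of the set $A$ directly). Your extra refinement---that by Lemma~\ref{lemma:No_back} every move goes to a child of the current center, so the distance after the center stops moving is exactly $|A|-1$---is a valid sharpening that the paper does not state but which follows from its lemmas.
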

\begin{proof}
This follows immediately from Theorem~\ref{Thm:JCICP} which shows that the number of distinct centers is finite, and the fact that the Jordan center in the IC model can move at most one hop at a time, from Lemma~\ref{lemma:dir_of_movement}.
\end{proof}

\section{Jordan Center in the discrete SI Model} 
\label{sec:Jordan_DSI}
%
Recall the discrete SI model from Definition ~\ref{def:DSI_Model}. In this section, we will consider the scenario where the infection starts from the root node $u^{\star}$ of a $d+1$-regular tree $\mathcal{G}$ and then spreads along the edges according to the discrete SI model. We borrow some notation and proof ideas from \cite{SIR}, which considered the maximum distance between the Jordan center and the root node. For the infected subtree $\mathcal{G}_t$ at any time $t$, we say that a node is at level $l$ if its distance from the root is $l$. Let $\mathcal{Z}_l$ denote the set of infected nodes\footnote{Note that the set $\mathcal{Z}_l$ grows over time, we have suppressed the time dependence in the notation for convenience.} at level $l$. $\mathcal{Z}_l^\tau$ is the set of infected nodes at level $l$, whose parents are in $\mathcal{Z}_{l-1}^\tau$ and who were infected within $\tau$ time slots after their parents were infected. This implies that all nodes in $\mathcal{Z}_{l}^\tau$ are infected by time $t \leq l \tau$. The sets $\mathcal{Z}_{0}$ and $\mathcal{Z}_{0}^\tau$ for any $\tau \geq 0$ are singletons and consist of the root node $u^{\star}$. Also, let $Z^\tau_l = |\mathcal{Z}_l^\tau|$. It is not difficult to see that the evolution of $\mathcal{Z}_l^\tau$ with level $l$ forms a Galton-Watson (GW) branching process \cite{GW}, which we denote by $B^\tau$ i.e. $B^\tau (l) = Z_l^\tau$. 

\begin{lemma}
\label{lemma:SIR_exist}
Given any $\epsilon > 0$, we can find sufficiently large $\tau$ and $l$, independent of time and the number of infected nodes, such that the probability that at least two $B^1$ branching processes starting from $\mathcal{Z}_l^\tau$ survive is at least $(1-\epsilon)$.
\end{lemma}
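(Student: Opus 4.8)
The plan is to exploit the supercriticality of the relevant branching processes together with the fact that a fast ($\tau=1$) infection path survives with a fixed positive probability that does not depend on time. First I would record the offspring law: a node of $\mathcal{Z}_{l}^{\tau}$ (with $l\ge 1$) has $d$ potential children in the rooted tree, and each is infected within $\tau$ slots of its parent with probability $q_\tau := 1-(1-p)^\tau$, so $B^\tau$ is a Galton--Watson process with $\mathrm{Bin}(d,q_\tau)$ offspring (the root's $d+1$ children are an irrelevant boundary effect) and mean $d q_\tau$. In particular $B^1$ has offspring $\mathrm{Bin}(d,p)$ and mean $dp>1$, hence is supercritical and survives (never dies out) with some fixed probability $\bar p_1>0$; this probability does not depend on time or on the infection time of the node at which the process is rooted, because ``infected within one slot of the parent'' is a single $\mathrm{Bernoulli}(p)$ trial per edge. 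The second basic ingredient is that, as $\tau\to\infty$, $q_\tau\to 1$, so the offspring law of $B^\tau$ concentrates on $d\ge 2$ and its extinction probability $\eta_\tau$ tends to $0$; this is precisely what ``sufficiently large $\tau$'' will buy.

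With these in hand I would reduce the statement to two independent high-probability events. Fix $\epsilon>0$. Choose $M$ large enough that $\mathrm{Bin}(M,\bar p_1)\ge 2$ with probability at least $1-\epsilon/2$; this is possible since $\bar p_1>0$ and the probability of at least two successes is nondecreasing in the number of trials. Next choose $\tau$ with $\eta_\tau<\epsilon/4$, and then, for this $\tau$, choose $l$ large enough that $\mathbb P(Z_l^\tau\ge M)\ge 1-\epsilon/2$. The last step is legitimate because for a supercritical Galton--Watson process the events $\{Z_l^\tau\ge 1\}$ increase to the survival event and $Z_l^\tau\to\infty$ on survival, so $\mathbb P(Z_l^\tau\ge M)\to 1-\eta_\tau>1-\epsilon/4$ as $l\to\infty$. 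Both $\tau$ and $l$ are finite and chosen as functions of $\epsilon$ only, hence independent of the current time and of the number of infected nodes.

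It remains to combine the two events. I would condition on $\mathcal Z_l^\tau$ and, from each of its $Z_l^\tau$ nodes, run the downward $B^1$ process. These processes use only the first-slot infection trials on edges strictly below level $l$, which lie in pairwise disjoint subtrees and are untouched by the formation of $\mathcal Z_l^\tau$ (which depends only on edge trials up to level $l$); here one uses the tree structure, namely that every node is infected solely through its unique parent, so the relevant trials are fresh and mutually independent. Hence, given $Z_l^\tau=m$, the number of surviving $B^1$ processes is $\mathrm{Bin}(m,\bar p_1)$. Conditioning and using monotonicity in $m$,
\begin{align*}
\mathbb P(\text{at least two } B^1 \text{ survive}) \ \ge\ \mathbb P(Z_l^\tau\ge M)\,\mathbb P\!\big(\mathrm{Bin}(M,\bar p_1)\ge 2\big) \ \ge\ \Big(1-\tfrac{\epsilon}{2}\Big)^2 \ \ge\ 1-\epsilon .
\end{align*}

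I expect the main obstacle to be the quantitative control of $Z_l^\tau$: a single supercritical process of \emph{fixed} parameter $\tau$ has extinction probability $\eta_\tau$ bounded away from $0$, so no choice of $l$ alone makes $Z_l^\tau$ large with probability close to $1$ --- it is essential to first drive $\eta_\tau$ down by taking $\tau$ large (using $q_\tau\to 1$), and only then push $l$ up. The second delicate point, though routine, is the independence argument: one must verify that the $B^1$ processes rooted at the level-$l$ nodes are genuinely independent of each other and of $\mathcal Z_l^\tau$, which rests on the fact that in a rooted tree each node is infected only by its parent, so the governing Bernoulli trials live on disjoint edge sets.
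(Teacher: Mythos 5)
Your proof is correct, but there is nothing internal to compare it against: the paper does not actually prove Lemma~\ref{lemma:SIR_exist}, its entire ``proof'' being a citation to Theorem 5 of \cite{SIR}. Your argument is therefore a genuine, self-contained replacement for that outsourced step, and it is the natural one: identify the offspring law of $B^\tau$ as $\mathrm{Bin}(d,1-(1-p)^\tau)$, note that $B^1$ is supercritical (since $pd>1$) with survival probability $\bar p_1>0$ independent of the root's infection time, drive the extinction probability $\eta_\tau$ of $B^\tau$ to $0$ by taking $\tau$ large, use the Galton--Watson dichotomy ($Z_l^\tau\to 0$ or $\infty$) to make $\mathbb{P}(Z_l^\tau\ge M)$ close to $1-\eta_\tau$ by taking $l$ large, and finally use the fact that the $B^1$ processes hanging off distinct level-$l$ nodes live on disjoint edge sets below level $l$, so that conditionally on $Z_l^\tau=m$ the number of survivors is $\mathrm{Bin}(m,\bar p_1)$. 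You also correctly isolate the one point where a naive argument fails --- for fixed $\tau$ the extinction probability is bounded away from $0$, so no choice of $l$ alone suffices; $\tau$ must be chosen first --- which is precisely why the lemma statement involves both parameters. Two minor points you could tighten: the claim $\eta_\tau\to 0$ deserves its one-line justification (monotone coupling in $\tau$ plus the fixed-point equation $\eta=(1-q_\tau+q_\tau\eta)^d$, whose limit points solve $\eta=\eta^d$ and are bounded away from $1$ by $\eta_\tau\le\eta_1<1$), and the root's $d+1$ children versus $d$ for all other nodes only helps you (it stochastically increases $Z_l^\tau$), so dismissing it as a boundary effect is fine but worth saying in that direction. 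The practical benefit of your version over the paper's is that it makes the persistence result for the discrete SI model (Theorem~\ref{Thm_Persistent_DSI}) self-contained rather than dependent on a proof written for a different model and purpose in \cite{SIR}.
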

\begin{proof}
See proof of Theorem 5 in \cite{SIR}.
\end{proof}

\noindent The above lemma shows that starting from nodes at some finite distance $l$ from the root node $u^{\star}$, there are at least two subtrees in the infected subgraph, whose depths increase by $1$ at every time step. Next, we present the main result of this section.

\begin{theorem}
\label{Thm_Persistent_DSI}
Given any $\epsilon > 0$, the Jordan center of an infected subtree growing according to the discrete SI model on an underlying $d+1$-regular tree is persistent with probability at least $(1-\epsilon)$.
\end{theorem}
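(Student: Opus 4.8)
The plan is to follow the same high-level architecture as the IC proof (Theorem~\ref{Thm:JCICP}), but to work on a favorable event of probability at least $(1-\epsilon)$ supplied by Lemma~\ref{lemma:SIR_exist}, since in the discrete SI model subtrees never truly ``die'' and one cannot hope for almost-sure persistence. First I would invoke Lemma~\ref{lemma:SIR_exist} to fix $\tau$ and $l$ such that, with probability at least $(1-\epsilon)$, there exist at least two $B^1$ branching processes starting from level $l$ that survive forever, each of which forces its associated subtree to grow in depth by exactly $1$ at every single time step from then on. Call this event $\mathcal{S}$; I would condition on $\mathcal{S}$ for the remainder of the argument.

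The key structural observation is that once two subtrees (rooted at distinct neighbors of some candidate center, or more precisely sitting in distinct directions from the eventual center) both grow in depth by $1$ at every time step, Corollary~\ref{cor:center_change} applies: if at least two of the deepest subtrees rooted at the neighbors of the current Jordan center grow in depth in a given step, the center cannot move in that step. So the heart of the argument is to show that under $\mathcal{S}$ the Jordan center is eventually ``pinned'' between (or at the meeting point of) the two perpetually-growing branches. Concretely, I would let $w_1$ and $w_2$ be the two level-$l$ nodes whose $B^1$ processes survive, and argue that after finitely many steps the two deepest subtrees as seen from the current center are precisely the ones containing $w_1$ and $w_2$, both growing by $1$ each step; once this configuration is reached, Corollary~\ref{cor:center_change} forbids any further movement, giving persistence. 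The bookkeeping here is to verify that the center, which by Lemma~\ref{lemma:dir_of_movement} moves at most one hop per step and only in the direction of the strictly unique deepest subtree, cannot keep oscillating or drifting away from the segment joining $w_1$ and $w_2$: any step in which both surviving branches are among the deepest is a frozen step, and steps before that can only occur finitely often because the center lies within a bounded distance $l$ of the root (cf.\ the distance bound of \cite{SIR}) and each genuine move strictly realigns it toward the two deep branches.

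I would then assemble these pieces: on the event $\mathcal{S}$, only finitely many moves of the Jordan center are possible before the two-deep-branches configuration locks in, and thereafter the center is persistent by Corollary~\ref{cor:center_change}; hence persistence holds on $\mathcal{S}$, which has probability at least $(1-\epsilon)$, proving the theorem. The main obstacle I anticipate is the middle step: making rigorous that the two surviving branches eventually \emph{become and remain} the two deepest subtrees as viewed from the wandering center, rather than merely being two branches that grow forever. Before the center settles, some transient subtree on the ``other side'' could temporarily be deeper, so one must bound how long such transients can dominate and show the center's at-most-one-hop, monotone-toward-depth motion (Lemma~\ref{lemma:dir_of_movement}) cannot carry it arbitrarily far; here the finite distance-from-root bound inherited from \cite{SIR} is the crucial ingredient that caps the number of pre-stabilization moves.
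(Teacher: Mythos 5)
Your overall skeleton does match the paper's: invoke Lemma~\ref{lemma:SIR_exist}, condition on the event that two $B^1$ processes survive, show the center eventually sits with two perpetually growing subtrees on distinct sides, and conclude with Corollary~\ref{cor:center_change}. However, the step you yourself flag as the main obstacle is exactly where your argument breaks, and the two ingredients you offer to fill it do not suffice. First, the claim that ``each genuine move strictly realigns it toward the two deep branches'' is false: by Lemma~\ref{lemma:dir_of_movement} the center moves toward the \emph{current deepest} subtree, and in the discrete SI model that subtree can be a transient branch containing neither surviving $B^1$ process. (Note also that a subtree containing a surviving $B^1$ process need not itself grow in depth every step, since its depth may be achieved by a different, stalled branch, so it need not even be among the two deepest --- your phrase ``forces its associated subtree to grow in depth by exactly $1$ at every single time step'' glosses over this.) Hence moves can carry the center away from both $w_1$ and $w_2$. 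Second, confinement of the center to a bounded ball around the root (the result of \cite{SIR}) does not imply that only finitely many moves occur: nothing a priori prevents the center from oscillating among a finite set of vertices forever. In the IC section such returns are ruled out by Corollary~\ref{cor:old_Jordan}, but that rests on Lemma~\ref{lemma:No_back}, whose proof uses the IC-specific fact that non-growing subtrees are permanently dead; no analogous no-return property holds (or is proved) for the discrete SI model. Since the gap between ``bounded distance'' and ``persistence'' is precisely what this theorem adds on top of \cite[Theorem 5]{SIR}, your argument in effect assumes the hard part.

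The paper closes this gap with a monotone potential-function argument that your proposal is missing. Define $f(t) = \psi_{\mathcal{G}_t} - d_t$, where $d_t$ is the distance from the current center to the tip of one surviving $B^1$ process: $f$ is non-increasing, lies in $[0, l\tau]$, and drops by exactly $1$ at every timestep at which the deepest subtree fails to grow (at such a step the center stays put and the centrality is unchanged, while the $B^1$ tip advances by one). Hence such failures happen at most $l\tau$ times, after which either the deepest subtree grows every step, or $f \equiv 0$, which forces the $B^1$ tip to be the eccentricity-achieving node and again yields growth of the deepest subtree at every step. A second potential $g(t) = \mathrm{dist}(v^*_t, \bar{v})$, where $\bar{v}$ is a vertex equidistant from the two growing tips, plays the same role for the second-deepest subtree: each failure of the second-deepest subtree to grow forces a move toward the deepest subtree, i.e.\ toward $\bar{v}$, so $g$ strictly decreases; since $g \geq 0$ and $g$ starts finite, after finitely many failures either the second-deepest subtree grows every step or the center has reached $\bar{v}$, both of whose two deepest subtrees grow forever. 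This monotone bookkeeping --- not the distance bound of \cite{SIR} --- is what makes the number of pre-stabilization moves finite; to repair your proof you would need to supply this device or an equivalent one.
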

\begin{proof}
We prove this result by showing that $ \exists$ $t_0 < \infty$ such that the Jordan center of the infection subtree at time $t_0$, \jordan{t_0},  satisfies the following property: the two deepest subtrees  \subtreeJordanTime{t_0}{t_0}{1} and \subtreeJordanTime{t_0}{t_0}{2} rooted at neighbors of \jordan{t_0} increase in depth by one $\forall t \geq t_0$. From Corollary \ref{cor:center_change}, this implies that the Jordan center will not move from \jordan{t_0} and thus establishes its persistence.
 
Lemma \ref{lemma:SIR_exist} tells us that with probability at least $(1-\epsilon)$, there are at least two surviving $B^1$ processes starting from $\mathcal{Z}_l^\tau$, i.e., their depths increase by $1$ at each timestep. Let $d_t$ denote the distance from the Jordan center at time $t$, \jordan{t},  to the farthest node in one of the $B^1$ processes. As defined earlier, the centrality of the tree $\mathcal{G}_t$ is given by $\psi_{\mathcal{G}_t}$. Define the function $f(t)$, for $t \geq l\tau$, as follows:
\begin{align}
\label{Eqn:Function1}
f(t) = \psi_{\mathcal{G}_t} - d_t  .
\end{align}
Some of the properties of $f$ are:
\begin{enumerate}
\item \textit{$f(l\tau) \leq l\tau$}: Consider time $l \tau$. Starting from the root node $u^{\star}$, the depth of the infected subtree can increase by at most one in each time unit, and thus the centrality  of the root node $\psi_\mathcal{G}(u^{\star})$ is at most $l \tau$. From the definition of the Jordan center, this implies that the centrality of the Jordan center $\psi_{\mathcal{G}_t} \le l\tau$. We also have $d_{l\tau} \geq 0$ which gives the claimed property.
\item \textit{$f$ is non-increasing}: From time $t$ to $t+1$, if the Jordan center does not change, $d_{t+1} = d_t + 1$ since the $B^1$ process increases in depth by one at each timestep. Also, $\psi_{\mathcal{G}_{t+1}} \le \psi_{\mathcal{G}_t} + 1$ and thus $f$ is non-increasing in this case. 
On the other hand, if the center changes from time $t$ to $t+1$, we have $\psi_{\mathcal{G}_{t+1}} = \psi_{\mathcal{G}_t}$ from Corollary \ref{cor:centrality_change}.  Either the center moves in the direction of the $B^1$ process or away from it, which leads to $d_{t+1} = d_t$ and $d_{t+1} = d_t + 2$ respectively. In both cases, $f$ is non-increasing. 
\item \textit{$f(\cdot) \geq 0$}: Follows from the definition of the Jordan center, see Definition \ref{def:Jordan}.
\end{enumerate}

Recall that our goal is to show that after some finite time, the Jordan center will be such that the two deepest subtrees rooted at its neighbors increase in depth at each timestep. We begin by proving that $\exists \ \hat{t} < \infty$ such that $\forall \ T \ge \hat{t}$, the Jordan center, \jordan{T}, is such that for any $t\ge T$, the deepest rooted subtree \subtreeJordanTime{t}{T}{1} increases in depth by $1$ from $t$ to $t+ 1$. 

Consider the Jordan center \jordan{T} at some time $T \ge l\tau$. If the height of the deepest subtree, \subtreeJordanTime{t}{T}{1} increases by $1$ at every time step after $T$, we stop. If not, then $\exists \: t_1 < \infty$ such that the deepest subtree \subtreeJordanTime{t_1}{T}{1} does not increase in height at time $t_1 + 1$. Then from \eqref{Eqn:Function1}, we have 
\begin{align}
\label{eq:decrease}
f(t_1+1) - f(t_1)  &= \psi_{\mathcal{G}_{t_1+1}} - \psi_{\mathcal{G}_{t_1}} + d_{t_1}  - d_{t_1+1} \nonumber \\
&\overset{(a)}{=} d_{t_1}  - d_{t_1+1} \nonumber \\
&\overset{(b)}{=} -1,
\end{align}
where $(a)$ follows from Lemma~\ref{lemma:dir_of_movement}, which dictates that since the deepest subtree\footnote{The deepest rooted subtree for the Jordan center \jordan{t_1} also does not grow since \jordan{t_1} belongs to \subtreeJordanTime{t}{T}{1}, from Lemma~\ref{lemma:dir_of_movement}} does not grow, the Jordan center will not move from time $t_1$ to $t_1+1$, and furthermore, the centrality will remain the same; and $(b)$ follows from the fact that the Jordan center does not move from time $t_1$ to $t_1+1$, while the depth of the $B^1$ process increases by $1$.  Following $t_1 + 1$, we continue the process and again wait for the next timestep when the deepest subtree does not grow in depth. 

Let $\{t_1, t_2, \ldots\}$ denote the timesteps when the function $f(\cdot)$ decreases by $1$. Thus, we have
$$f(t_i) = l\tau - i .$$
Since $l \tau < \infty$, we have that the process will stop at some $\hat{t} < \infty$ when either the height of the deepest subtree \subtreeJordanTime{t}{\hat{t}}{1} increases by $1$ at every time step following $\hat{t}$ or $f(\hat{t}) = 0$. If it is the former, our claim is proved. Say it is the latter, then from the properties of the function $f$ detailed above, we have that $f(t) = 0$ $\forall t \ge \hat{t}$. From \eqref{Eqn:Function1}, this implies that $\psi_{\mathcal{G}_{t}}  = d_{t}$, which shows that the farthest node in one of the $B^1$ processes, is also the farthest node (or at least one of them) for the Jordan center \jordan{t} at any time $t \ge \hat{t}$. Since the $B^1$ process grows in depth by $1$ at each timestep, we have that $\forall \ T \ge \hat{t}$, the Jordan center, \jordan{T}, is such that for any $t\ge T$, the deepest rooted subtree \subtreeJordanTime{t}{T}{1} increases in depth by $1$ from $t$ to $t+ 1$. 

Now, what remains is to show that $\exists \ \tilde{t}$ such that $\hat{t} \le \tilde{t} < \infty$ and $\forall \ T' \ge \tilde{t}$, the Jordan center, \jordan{T'}, is such that for any $t \ge T' \geq \tilde{t}$, the second deepest rooted subtree \subtreeJordanTime{t}{T'}{2} also increases in depth by $1$ at every time step. Recall that Lemma \ref{lemma:SIR_exist} shows that with probability at least $(1-\epsilon)$, there are at least $2$ surviving $B^1$ processes starting  from $\mathcal{Z}_l^\tau$. Thus, we know that there is at least $1$ $B^1$ process which is not the deepest $B^1$ process considered in the first part of the proof above. Let this process be called $B^1_2$. 

We split the proof into 2 cases:

\textbf{Case 1:} When $B^1_2$ is a branch of the deepest subtree, see Figure \ref{fig:Case_1}. 
\begin{figure}[th]
\begin{center}
\includegraphics[angle=0,scale=0.4]{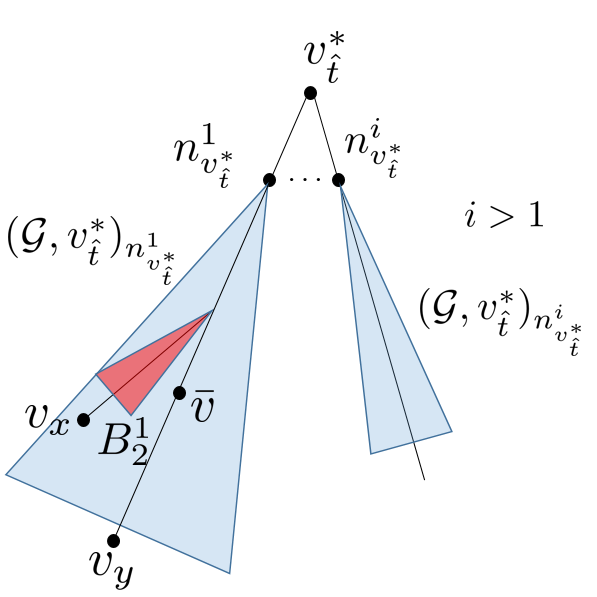}
\caption{Case 1}
\label{fig:Case_1}
\end{center}
\end{figure}
\noindent Consider time $\hat{t}$ as derived in the  proof of the first claim above. Consider the Jordan center \jordan{\hat{t}} and let the farthest node in $B^1_2$  be $v_x$ and the farthest node in the deepest subtree be $v_y$, see Figure \ref{fig:Case_1}. Consider a node $\bar{v}$ in the deepest subtree \subtreeJordanTime{\hat{t}}{\hat{t}}{1} such that:
\begin{align}
dist(\bar{v},v_x) = dist(\bar{v}, v_y).
\end{align}
From the definition of a Jordan center in Definition \ref{def:Jordan}, we have
\begin{align}
dist(\text{\jordan{\hat{t}}}, \bar{v}) \leq \psi_{\mathcal{G}_{\hat{t}}} < \infty .
\end{align}
Let $g(\cdot)$ be defined as follows:
\begin{align}
g(t) = dist(\text{\jordan{t}}, \bar{v}), \ \forall t \ge \hat{t}.
\end{align} 
Now, consider time $\hat{t}$ and the second deepest subtree \subtreeJordanTime{\hat{t}}{\hat{t}}{2} rooted at a neighbor of the Jordan center \jordan{\hat{t}}. If this subtree continues to grow in depth by $1$ at every following time step, then we are done. If not, $\exists \: \bar{t} < \infty$ such that  it does not grow at time $\bar{t}+1$. If this happens, then from Lemma \ref{lemma:dir_of_movement} we have that the Jordan center will change\footnote{Note that it is possible that the depth of the second deepest is the same as the deepest and in this case from lemma \ref{lemma:dir_of_movement}, the center will  change only when the second deepest does not grow for two distinct time steps. This technicality does not alter the flow of the proof.} from time $\bar{t}$ to $\bar{t} + 1$.  Furthermore, since the Jordan center moves in the direction of the deepest subtree, we have:
\begin{align}
g(\bar{t}+1) - g(\bar{t}) = -1.
\end{align}
Now, continue this process and stop when either the height of the subtree \subtreeJordanTime{t}{t}{2} grows by $1$ at every following time step, or when $g(\cdot)=0$. Let this time be $t_0'$. If $g(\cdot) = 0$, the Jordan center has reached $\bar{v}$, for which the two deepest subtrees are (a subtree of) the deepest subtree at time $\hat{t}$ and $B^1_2$, both of which grow by one at every following time step. Thus we have shown that after some finite time, the Jordan center will be such that the two deepest subtrees rooted at its neighbors increase in depth at each timestep. This completes the proof of the claim in Case 1. 

\begin{figure}[th]
\begin{center}
\includegraphics[angle=0,scale=0.4]{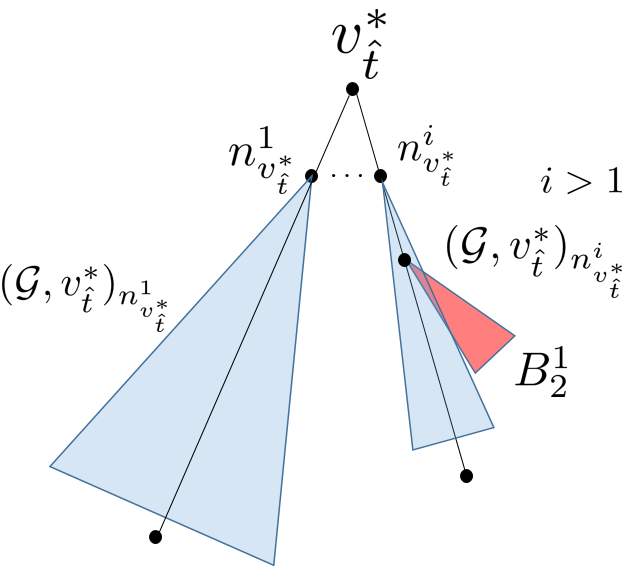}
\caption{Case 2}
\label{fig:Case_2}
\end{center}
\end{figure}
\textbf{Case 2:} When $B^1_2$ is not a branch of the deepest subtree, see Figure~\ref{fig:Case_2} for an illustration. The proof for this case is similar to the proof for the deepest subtree considered earlier, and so we skip it here. 

Finally, tying together all the pieces described above, we have that there exists some time $t_0<\infty$ such that $\forall t > t_0$, the two deepest subtrees \subtreeJordanTime{t}{t}{1} and \subtreeJordanTime{t}{t}{2} rooted at the neighbors of the Jordan center \jordan{t} increase in depth from $t$ to $t+1$. From Corollary \ref{cor:center_change}, this implies that the Jordan center will not move and thus establishes its persistence.
\end{proof}

\begin{corollary}
\label{cor:JordanRootDistance2}
Given any $\epsilon > 0$, the distance between the root of a $d+1$-regular tree and the Jordan center of an infected subtree growing according to the discrete SI model on the underlying regular tree is finite with probability at least $(1-\epsilon)$.
\end{corollary}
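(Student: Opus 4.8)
The plan is to deduce the corollary directly from the persistence result of Theorem~\ref{Thm_Persistent_DSI}, in exactly the way Corollary~\ref{cor:JordanRootDistance} was obtained from the IC persistence theorem. First I would invoke Theorem~\ref{Thm_Persistent_DSI}: with probability at least $(1-\epsilon)$ there exists a finite (random) time $t_0$ such that the Jordan center is fixed for all $t \ge t_0$. I would then carry out the remaining argument conditioned on this persistence event.

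Next I would bound the displacement of the center up to time $t_0$. At time $0$ the only infected vertex is the root $u^{\star}$, so the Jordan center \jordan{0} equals $u^{\star}$. By Lemma~\ref{lemma:dir_of_movement}, whenever the center moves it relocates to an adjacent vertex, i.e.\ by at most one hop per timestep. Iterating this over the $t_0$ steps from time $0$ to time $t_0$, the center can be displaced by at most $t_0$ hops, so the distance from $u^{\star}$ to \jordan{t_0} is at most $t_0 < \infty$. Since the center does not move after $t_0$ on this event, the same bound $t_0$ controls the root-to-center distance at every time $t \ge 0$, and hence the supremum over time of this distance is finite.

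Assembling these two steps, the root-to-center distance is finite on an event of probability at least $(1-\epsilon)$, which is precisely the assertion. I do not anticipate a substantive obstacle: the result is an immediate consequence of persistence combined with the one-hop movement bound of Lemma~\ref{lemma:dir_of_movement}. The only point requiring care is the probabilistic bookkeeping, namely that finiteness is claimed on exactly the persistence event furnished by Theorem~\ref{Thm_Persistent_DSI}, rather than as a deterministic almost-sure bound; in particular the bounding value $t_0$ is itself random and finite only on that event.
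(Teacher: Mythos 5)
Your proposal is correct and follows essentially the same route as the paper: the paper likewise deduces the corollary immediately from Theorem~\ref{Thm_Persistent_DSI} (persistence, hence finitely many distinct centers) combined with the one-hop-per-move fact from Lemma~\ref{lemma:dir_of_movement}. Your explicit bookkeeping---that the center starts at $u^{\star}$, moves at most one hop per step up to the random persistence time $t_0$, and is frozen thereafter on the persistence event---is just a slightly more detailed rendering of the paper's one-line argument.
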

\begin{proof}
This follows immediately from Theorem~\ref{Thm_Persistent_DSI} which shows that the number of distinct centers is finite, and the fact that the Jordan center in the SI model can move at most one hop at a time, from Lemma~\ref{lemma:dir_of_movement}. This result has also been proved in \cite[Theorem 5]{SIR}.
\end{proof}
%



\section{Jordan center in the SI model}
\label{sec:Jordan_SI}
Recall the continuous time SI model from Definition ~\ref{def:SI_Model}. In this section, we will consider the scenario where the infection starts from the root node $u^{\star}$ of a $d+1$-regular tree $\mathcal{G}$ and then spreads along the edges according to the SI model. In this case, we show that when the height of the tree is $n$, the Jordan center is within a distance of $O(\log n)$ from the root $u^{\star}$ of the tree. 

\begin{lemma}
\label{lemma:macdiarmid}
For any $\epsilon > 0$ and $n$ large enough, when the depth of a deepest subtree $(\mathcal{G}_{t},u^{\star})_{n^{1}_{u^{\star}}}$ rooted at a neighbor of  $u^{\star}$ is $n$, the depth of every other subtree $(\mathcal{G}_{t},u^{\star})_{n^{i}_{u^{\star}}}$, $i >1$, is at least  $n- c_3 \log n $ for some constant $c_3 > 0$ with probability at least $(1 - \epsilon)$.
\end{lemma}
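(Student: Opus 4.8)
The plan is to reduce the statement to a comparison of first-passage times across the $d+1$ subtrees hanging off $u^{\star}$, and then control the two relevant tails separately. Under the continuous SI model each edge carries an independent $\mathrm{Exp}(1)$ transmission time, so the subtree rooted at the $i$-th neighbor $n^{i}_{u^{\star}}$ is a $d$-ary tree explored by first-passage percolation with i.i.d.\ $\mathrm{Exp}(1)$ weights, and the $d+1$ subtrees evolve independently of one another since they share no edges (including the initial edges out of $u^{\star}$). For each $i$ let $T^{(i)}_k$ denote the first time a node at level $k$ of subtree $i$ is infected. Because depths only increase in time, it suffices to verify the claim at the instant $t^{\star}=\min_i T^{(i)}_n$ at which the deepest subtree first attains depth $n$, and there the assertion is exactly the event $\max_j T^{(j)}_{n-c_3\log n}\le \min_i T^{(i)}_n$. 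I would therefore bound the complementary (bad) event.

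First I would pin down the ballistic rate. A first-moment computation gives that the expected number of level-$k$ paths of total weight below $\gamma k$ equals $d^{k}\,\mathbb{P}(\mathrm{Gamma}(k,1)\le \gamma k)$, and Cram\'er's theorem for sums of $\mathrm{Exp}(1)$ variables identifies the speed $\gamma\in(0,1)$ as the unique solution of $\gamma-1-\log\gamma=\log d$, with $T^{(i)}_k/k\to\gamma$. Fixing the threshold $\theta_n=\gamma n-\tfrac{\gamma c_3}{2}\log n$, a union bound over the $d+1$ subtrees together with independence (and the trivial inequality $T^{(j)}_{n-c_3\log n}\le T^{(j)}_n$, so that on the bad event for $j$ the minimum is effectively over $i\neq j$) reduces matters to
\begin{align*}
\mathbb{P}(\text{bad})\;\le\;\sum_j\Big[\,\mathbb{P}\big(T^{(j)}_{n-c_3\log n}>\theta_n\big)+d\,\mathbb{P}\big(T_n\le \theta_n\big)\,\Big],
\end{align*}
where $T_n$ is a generic first-passage time to level $n$. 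The second term is the easy ``front advanced too fast'' tail: the same first-moment/Chernoff estimate yields $\mathbb{P}(T_n\le \gamma n-\tfrac{\gamma c_3}{2}\log n)\le n^{-c'}$ for an explicit $c'>0$ proportional to $c_3$, which is negligible for large $n$.

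The heart of the argument is the first term, the ``front lagging too far'' tail $\mathbb{P}(T_{k}>\gamma k+\tfrac{\gamma c_3}{2}\log k)$ with $k=n-c_3\log n$, i.e.\ the one-sided statement that the infection front falls behind its ballistic position by at most $O(\log k)$. This is where I expect the real difficulty. A naive bounded-differences (McDiarmid) estimate fails, because resampling a single edge weight can move $T_k$ by an unbounded amount and there are exponentially many edges, so the sum of squared increments is hopelessly large. Instead I would exploit supercriticality: since the SI process has no recovery it survives and grows exponentially almost surely, so one can embed a supercritical Galton--Watson process in which each ``on-schedule'' infected node produces, within a fixed time window $s_0$, at least a fixed number of descendants one level deeper (the mean offspring $d(1-e^{-s_0})$ exceeds $1$ once $s_0>\log\tfrac{d}{d-1}$). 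Restarting this embedding from the many infected nodes present at a modest time and chaining the restarts shows that the probability the front stalls for a cumulative lag of $c\log k$ decays at least polynomially in $k$; equivalently, one may invoke the known logarithmic-order fluctuations of the minimal displacement of branching random walk. Either route makes the first term $o(1)$ once $c_3$ is taken large enough relative to the embedded branching rate.

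Finally I would collect the pieces: both tails are summed over the $d+1$ subtrees (a constant number), and for $n$ large the total drops below the prescribed $\epsilon$, which is the claimed probability bound. The constant $c_3$ is fixed in the third step to make the lagging-front tail summably small, and the same $c_3$ then automatically controls the stronger advancing-front tail. The only genuinely delicate ingredient is the logarithmic upper bound on how far the front can lag; everything else is a routine first-moment/large-deviations calculation combined with the independence of distinct subtrees.
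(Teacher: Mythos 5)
Your overall reduction is the same as the paper's: compare first-passage times of the $d+1$ independent subtrees at the moment the deepest one reaches level $n$, bound the ``front too fast'' tail by a first-moment/Chernoff computation, bound the ``front too slow'' tail by a logarithmic-correction bound at the \emph{same} speed $\gamma$, and then take $c_3$ large. Your identification of $\gamma$ (the root of $\gamma - 1 - \log\gamma = \log d$) agrees with the paper's $\gamma = \inf\{a : ade^{1-a}\ge 1\}$, your treatment of the $\mathrm{Exp}(1)$ edge delays out of $u^{\star}$ is equivalent to the paper's handling of the variables $T_k$, and your fast-tail estimate $\mathbb{P}\bigl(T_n \le \gamma n - \tfrac{\gamma c_3}{2}\log n\bigr)\le n^{-c'}$ is correct. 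Where you differ is the slow tail, and there your self-contained argument has a genuine gap.

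The supercritical Galton--Watson embedding with a fixed window $s_0$ cannot deliver the bound you need. Supercriticality forces $s_0 > \log\frac{d}{d-1}$, and this threshold is strictly larger than $\gamma$ for every $d\ge 2$: in the paper's notation $\mu\bigl(\log\frac{d}{d-1}\bigr) = e(d-1)\log\frac{d}{d-1} \ge e/2 > 1$, so $\gamma < \log\frac{d}{d-1}$ (for $d=2$, $\gamma\approx 0.23$ versus $\log 2\approx 0.69$). Consequently the embedded process, even on survival and after chaining restarts, only certifies that level $k$ is infected by time roughly $s_0 k + O(\log k)$, which exceeds the required $\gamma k + O(\log k)$ by an amount \emph{linear} in $k$ --- while the lemma's entire budget is the depth deficit $c_3\log n$, i.e.\ a time slack of order $\gamma c_3\log n$. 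Shrinking the window below $\log\frac{d}{d-1}$ makes the embedding subcritical; replacing single-level windows by $m$-level windows of length $\gamma m + C\log m$ and chaining $k/m$ of them yields a correction of order $(k/m)\log m$, which is $O(\log k)$ only when $m$ is comparable to $k$ --- at which point you are re-deriving the sharp first-birth theorem rather than using supercriticality. So your ``either route'' collapses to the second route: invoking the known result that the first birth time in generation $k$ equals $\gamma k + O(\log k)$ with exponential tails on both sides. That is exactly what the paper does, citing Theorems 1 and 2 of McDiarmid's paper on minimal positions in branching random walks to obtain \eqref{Eqn:BRW} and \eqref{Eqn:BRW2}; your proof is correct only in the variant that leans on that citation, and the embedding argument you present as the ``heart'' of the proof would fail. (Incidentally, the relevant ``McDiarmid'' here is that branching-random-walk paper, not the bounded-differences inequality you correctly rule out.)
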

\begin{proof}

Consider the $d+1$ subtrees of the root $u^\star$. Each of them gets infected according to an independent SI model. For some $i \in [1:d+1]$, consider any neighbor of the root \neighbor{u^\star}{i} and let $B_n^i$ denote the first time a node at distance $n$ from the root node \neighbor{u^\star}{i} gets infected, i.e., $\text{depth} (\mathcal{G}_{B_n^i},u^{\star})_{n^{i}_{u^{\star}}} =  n$. Now, let the time of infection of \neighbor{u^\star}{k}, $k \in [1:d+1]$ be denoted by the random variable $T_k$. Since this is the SI model, we have $T_k \sim exp(1)$. Therefore, the total time for the infection to reach level $(n+1)$ through \neighbor{u^\star}{i} is $T_i + B_n^i$. We use results on branching random walks from \cite{mcdiarmid} to show that 
\begin{align}
\label{Eqn:BRW}
\mathbb{P}( B_n^i \leq \gamma n + c_1 \log n - x) \leq e^{-\delta x}
\end{align}
for some constants $\gamma, c_1, \delta>0$, details are provided below in Section~\ref{sec:concineqbrw}. For any $n\ge 1$, let $\mathcal{E}_n^i$ denote the event that $B_n^i \leq \gamma n + (c_1 - \frac{2}{\delta})\log n \overset{\Delta}{=} t_1$. Taking $x = \frac{2}{\delta}\log n$ in \eqref{Eqn:BRW}, we get
\begin{align}
\mathbb{P}\left( \mathcal{E}_n^i \right) \leq \frac{1}{n^2}. 
\end{align}
It is easy to see that $\sum_{n=1}^{\infty} \mathbb{P}\left( \mathcal{E}_n^i \right) < \infty$ and thus, applying the Borel-Cantelli lemma, we get that with probability $1$, only finitely many of these events can occur. Therefore, for $n$ large enough
\begin{align}
\label{Eqn:T1}
B_n^i > \gamma n + \tilde{c}_1\log n \overset{\Delta}{=} t_1 \qquad \text{w.p. }1
\end{align}
where $\tilde{c}_1 = c_1 - \frac{2}{\delta}$.

Next, consider some $j \neq i$ and the subtree rooted at the corresponding neighbor \neighbor{u^\star}{j}.  Let $B^j_{n - c_3 \log n}$ denote the first time a node at distance $n - c_3 \log n$ from the root node of the subtree, \neighbor{u^\star}{j},  gets infected, i.e., $\text{depth} (\mathcal{G}_{B_{n - c_3 \log n}}^j,u^{\star})_{n^{j}_{u^{\star}}} =  n - c_3 \log n$.  Again, using \cite{mcdiarmid} we also have
\begin{align}
\mathbb{P}(B^j_{n - c_3 \log n} \geq \gamma ({n - c_3 \log n}) +& c_2 \log ({n - c_3 \log n}) + x) \nonumber \\ 
&\leq e^{-\delta x}
\label{Eqn:BRW2}
\end{align}
for some $\gamma, c_2, \delta>0$ and any $c_3 \ge 0$. Once again, applying the Borel Cantelli lemma, we know that for $n$ large enough
\begin{align}
\label{Eqn:T2}
B^j_{n - c_3 \log n} \leq & \gamma (n - c_3 \log n) \nonumber \\
&+ \tilde{c}_2 \log ({n - c_3 \log n}) \overset{\Delta}{=} t_2 \qquad \text{w.p. }1
\end{align}
where $\tilde{c}_2 = c_2 + \frac{2}{\delta}$. 
\ignore{
Choose $n$ large enough and $c_3$ such that $t_2 < t_1$. Then, we have for large enough $n$ and any $i \neq j$:
\begin{align}
B^j_{n - c_3 \log n} < B^i_n \qquad \text{w.p. }1 .
\end{align}
}
The above arguments hold true for any neighbor \neighbor{u^\star}{j}, $j \neq i$. Next, note that by choosing $c_3$ large enough, one can make $t_2$ smaller than $t_1$. Now using \eqref{Eqn:T1} and \eqref{Eqn:T2}, we can choose $c_3$ large enough such that for $n$ large enough and any $j \ne i$, with probability at least $(1-\epsilon)$ we have:
\begin{align}
T_j - T_i \le t_1 - t_2 \ \Rightarrow \ B^j_{n - c_3 \log n} + T_j \le B^i_n + T_i.
\end{align}
This says that by the time any subtree $(\mathcal{G}_{t},u^{\star})_{n^{i}_{u^{\star}}}$ reaches a depth $n$, all other subtrees have depths greater than $n - c_3 \log n$ with probability $(1-\epsilon)$ and this proves the lemma. See Figure~\ref{fig:SI} for an illustration of the result.
\end{proof}
\noindent Using this lemma, we have the following theorem:
\begin{theorem}
\label{Thm:SIDistance}
Given any $\epsilon > 0$, when the depth of an SI infected tree is $n$, the Jordan center of the tree is within a distance of $C \log n$ from the root with probability at least $(1-\epsilon)$, for some constant $C$.
\end{theorem}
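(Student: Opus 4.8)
The plan is to leverage Lemma~\ref{lemma:macdiarmid} to show that the two deepest subtrees hanging off the root $u^{\star}$ have nearly equal depths, and then to exploit this near-balance to pin the Jordan center close to $u^{\star}$. Suppose the deepest subtree $(\mathcal{G}_{t},u^{\star})_{n^{1}_{u^{\star}}}$ rooted at a neighbor of the root has depth $n$, so that the eccentricity (centrality) of the root itself is $\psi_{\mathcal{G}_t}(u^{\star}) = n+1$. By Lemma~\ref{lemma:macdiarmid}, with probability at least $(1-\epsilon)$ every other subtree rooted at a neighbor of $u^{\star}$ has depth at least $n - c_3 \log n$; in particular the second deepest subtree $(\mathcal{G}_{t},u^{\star})_{n^{2}_{u^{\star}}}$ has depth at least $n - c_3 \log n$, and hence its farthest node lies at distance at least $n - c_3 \log n + 1$ from $u^{\star}$.

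First I would argue that the Jordan center $v^{*}$ must lie on the path from $u^{\star}$ into the deepest subtree (or coincide with $u^{\star}$). Indeed, moving off $u^{\star}$ into any non-deepest subtree only increases the distance to the farthest node of the deepest subtree beyond $n+1 = \psi_{\mathcal{G}_t}(u^{\star})$, which would contradict the fact that the Jordan center has the minimum eccentricity in the tree (Definition~\ref{def:Jordan}). Hence $v^{*}$ sits at some distance $k$ from $u^{\star}$ inside the deepest subtree, and the boundary cases ($v^{*}=u^{\star}$ or $v^{*}$ a sibling branch) only make the desired bound easier.

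The main estimate then follows by lower-bounding the eccentricity of $v^{*}$ using the second deepest subtree. Since reaching the farthest node of $(\mathcal{G}_{t},u^{\star})_{n^{2}_{u^{\star}}}$ from $v^{*}$ requires first returning to $u^{\star}$, we have
\begin{align*}
\psi_{\mathcal{G}_t}(v^{*}) \ \geq\ k + 1 + \bigl(n - c_3 \log n\bigr).
\end{align*}
On the other hand, because $v^{*}$ is the Jordan center, $\psi_{\mathcal{G}_t}(v^{*}) \leq \psi_{\mathcal{G}_t}(u^{\star}) = n+1$. Combining the two inequalities yields $k \leq c_3 \log n$, so that $v^{*}$ is within $c_3 \log n$ of the root. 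Taking $C = c_3$ (enlarged by a fixed constant to absorb the additive $\pm 1$ terms and the mismatch between the tree depth and the deepest-subtree depth) then gives the claim, and the whole event holds with probability at least $(1-\epsilon)$ by Lemma~\ref{lemma:macdiarmid}. Intuitively, this just says the center sits near the midpoint of the diameter, which runs from the deepest leaf of subtree~$1$ through $u^{\star}$ to the deepest leaf of subtree~$2$.

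I expect the main obstacle to be bookkeeping rather than conceptual: carefully matching the lemma's notion that the deepest subtree has depth $n$ with the theorem's statement that the infected tree has depth $n$, and confirming that the Jordan center's eccentricity is genuinely achieved at the farthest node of one of these two controlled subtrees rather than at some third subtree. The latter is handled by the fact that $(\mathcal{G}_{t},u^{\star})_{n^{2}_{u^{\star}}}$ is by definition the \emph{second} deepest, so every remaining subtree is no deeper and therefore contributes a no-larger distance; none of these subsidiary checks should alter the $O(\log n)$ conclusion.
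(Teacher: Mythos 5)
Your proposal is correct and takes essentially the same route as the paper: both arguments rest entirely on Lemma~\ref{lemma:macdiarmid} and then compare eccentricities against that of the root $u^{\star}$, concluding that any node farther than $O(\log n)$ from the root must have eccentricity exceeding the root's because every other subtree rooted at a neighbor of $u^{\star}$ is at least $n - c_3 \log n$ deep. Your write-up merely reorganizes this comparison (first localizing the center in the deepest subtree, then bounding its distance $k$ via the two-sided inequality), and your explicit handling of the $\pm 1$ depth bookkeeping is, if anything, more careful than the paper's.
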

\begin{proof}
This follows from Lemma~\ref{lemma:macdiarmid}. When the depth of the tree is $n$, the original root, $u^\star$, has a centrality of $n$. Consider any node $v$ which is at a distance of more than $c_3 \log n$ from the root, where the constant $c_3$ has been defined in Lemma~\ref{lemma:macdiarmid}. From Lemma~\ref{lemma:macdiarmid}, for $n$ large enough the depth of all the subtrees rooted at the neighbors of the root node is at least $c_3 \log n$. Thus, the centrality of node $v$ is greater than $n$, which is the centrality of the root. This shows that the node with minimum centrality i.e. the Jordan center, has to lie within a distance of $c_3 \log n$ from the root. 
\begin{figure}[th]
\begin{center}
\includegraphics[angle=0,scale=0.5]{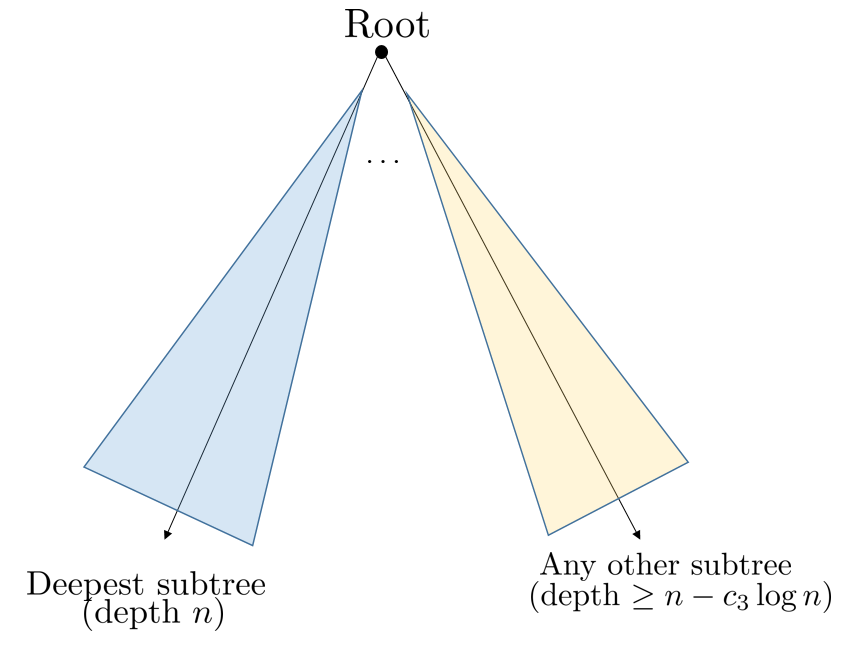}
\caption{Tree in the SI model}
\label{fig:SI}
\end{center}
\end{figure}
\end{proof}

\subsection{Proofs of \eqref{Eqn:BRW} and \eqref{Eqn:BRW2}}
\label{sec:concineqbrw}
In the proof of Lemma~\ref{lemma:macdiarmid}, what remains is to show how we obtain \eqref{Eqn:BRW} and \eqref{Eqn:BRW2}. We use Theorems $1$ and $2$ from \cite{mcdiarmid} in order to establish these inequalities. Here, we show that our problem setting satisfies all the necessary conditions for these theorems to be applied.
 
\noindent Let $Z_n(t)$ denotes the number of nodes $v$ in the $n^{th}$ generation (nodes which are at a distance of $n$ from the root node), which are born before time $t$, and let $Z_n = \sup_{t \rightarrow \infty} Z_n(t)$. Since our underlying tree model is a $d+1$-regular tree, we have $Z_n = d^n$. $B_n$ is defined as the first time of a birth in the $n^{th}$ generation. The time of infection of the nodes in the $n^{th}$ generation are denoted by $z_{n1}, z_{n2}, \cdots$. $F(t) = \mathbb{E}[Z_1(t)]$ and $\alpha = \inf \{t : F(t) > 0\}$. \\

\noindent Since we are working with the SI model, Definition \ref{def:SI_Model}, where the infection spread from one node to its neighbour is exponentially distributed with mean $1$, we can see that $\alpha = 0$. Now, define:
\begin{align}
\phi (\theta) = \mathbb{E}[ \sum_r e^{-\theta z_{1r}} ] .
\end{align}
Since $z_{1r} \sim exp(1)$ we have:
\begin{align}
\phi (\theta) = \frac{d}{1+\theta} .
\end{align}
Again, define:
\begin{align}
\mu (a) = \inf \{ e^{\theta a} \phi (\theta) : \theta \geq 0 \} .
\end{align}
On substituting $\phi(\cdot)$ we get:
\begin{align}
\mu(a) = ad e^{1-a} .
\end{align}
Time constant $\gamma$ is defined as:
\begin{align}
\gamma = \inf \{ a : \mu(a) \geq 1\} .
\end{align}
Since $\mu(0) = 0$ and $\mu(1) = d$, we have by the continuity of $\mu(\cdot)$ that 
\begin{align}
0 < \gamma < 1 .
\end{align}

\noindent We have that the conditions of Theorem 2 \cite{mcdiarmid} are satisfied which gives for some constants $\gamma > 0, c_1, \delta>0$
\begin{align}
\mathbb{P}( B_n \leq \gamma n + c_1 \log n - x) \leq e^{-\delta x} .
\end{align}
Since $Z_1 = d < \infty$, we apply part (b) of the theorem on $B_{n - c_3 \log n}$ to get:
\begin{align}
\mathbb{P}(B_{n - c_3 \log n} &\geq \gamma ({n - c_3 \log n}) \nonumber \\
&+ c_2 \log ({n - c_3 \log n}) + x) \leq e^{-\delta x} .
\end{align}

\section{Simulations}
\label{sec:Simulation}
In this section, we simulate infection spread on a regular tree using the IC model and the SI model and track the movement of the Jordan center of the infected subtree. In particular, we consider the maximum distance between the Jordan center and the root node and the number of times the center changes over the course of the simulation. 

\begin{figure}
        \begin{subfigure}[b]{0.49\textwidth}
                \includegraphics[width=\linewidth]{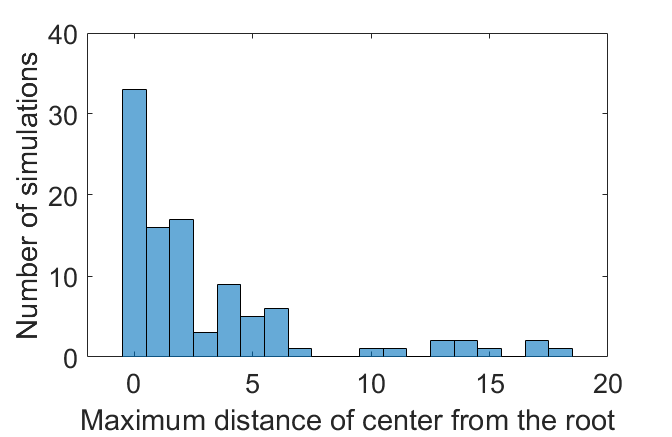}
                \caption{Distance from the root}
                \label{fig:Jordan_IC_dist}
        \end{subfigure}%
        \begin{subfigure}[b]{0.49\textwidth}
                \includegraphics[width=\linewidth]{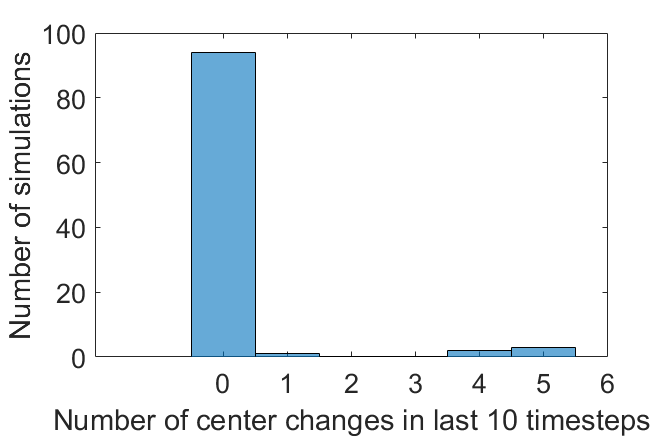}
                \caption{Number of  center changes}
                \label{fig:Jordan_IC_centre}
        \end{subfigure}%
        \caption{Jordan center in the IC model ($p = 0.4, d = 4$)}
\end{figure}

\begin{figure}
        \begin{subfigure}[b]{0.49\textwidth}
                \includegraphics[width=\linewidth]{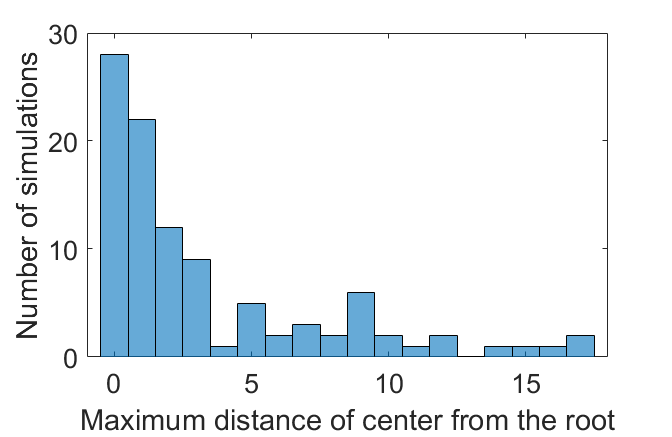}
                \caption{Distance from the root}
                \label{fig:Jordan_IC_dist_4}
        \end{subfigure}%
        \begin{subfigure}[b]{0.49\textwidth}
                \includegraphics[width=\linewidth]{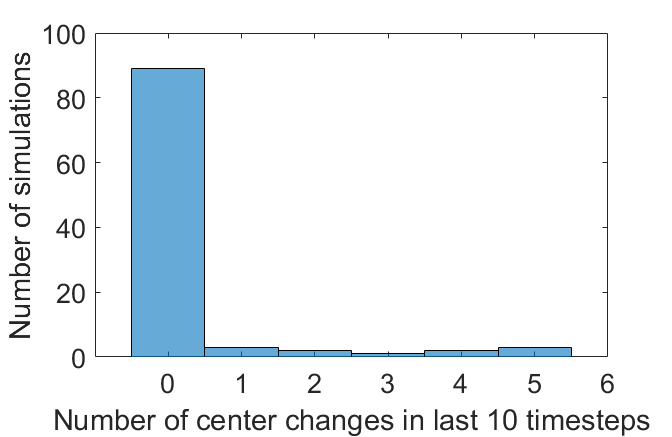}
                \caption{Number of  center changes}
                \label{fig:Jordan_IC_centre_4}
        \end{subfigure}%
        \caption{Jordan center in the IC model on an irregular tree ($p = 0.4, d \in \{3,4\}$)}
\end{figure}
\begin{figure}
        \begin{subfigure}[b]{0.49\textwidth}
                \includegraphics[width=\linewidth]{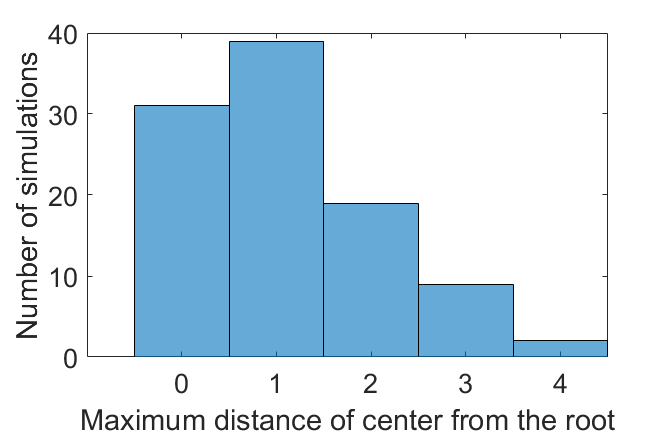}
                \caption{Distance from the root}
                \label{fig:Jordan_SI_dist}
        \end{subfigure}%
        \begin{subfigure}[b]{0.49\textwidth}
                \includegraphics[width=\linewidth]{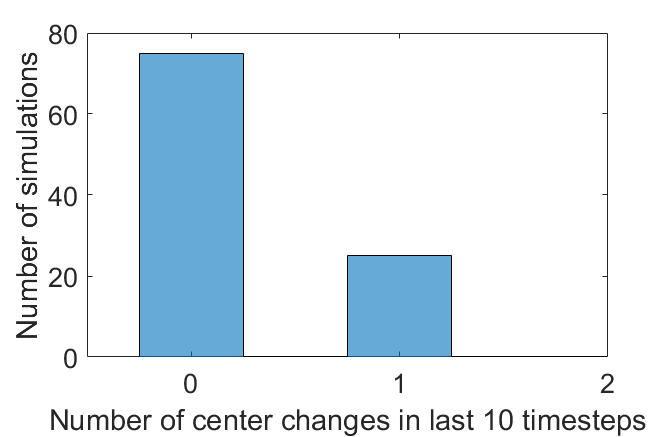}
                \caption{Number of distinct centers}
                \label{fig:Jordan_SI_centre}
        \end{subfigure}%
        \caption{Jordan center in the SI model ($d = 4$)}
\end{figure}

For the IC model, we created an underlying $4$-regular tree and used $p=0.4$ for the probability of an infected node spreading the infection to its neighbor. Starting with the root node, we spread the infection for $40$ timesteps\footnote{We were able to run the IC model simulation for only $40$ timesteps of the simulation because it became computationally infeasible beyond that. This is also why we were not able to run the simulations for the discrete SI model long enough and hence do not present the results here.} and evaluate the Jordan center of the infected subtree at each timestep. We repeated this experiment $100$ times and present our results in Figures~\ref{fig:Jordan_IC_dist}  and \ref{fig:Jordan_IC_centre}. We see that for a majority of the simulations, the Jordan center remains close to the root, as illustrated in Figure~\ref{fig:Jordan_IC_dist}. Also, from Figure~\ref{fig:Jordan_IC_centre} we see that in almost all the experiments, the Jordan center did not change  in the last $10$ out of the $40$ timesteps. This suggests that the Jordan center is close to attaining persistence, as suggested by our theoretical results in Section~\ref{sec:Jordan_IC}. 
 
For the continuous time SI model, we again considered a $4$-regular tree and in each run, ran the simulation till the infection spread to $100$ vertices. We tracked the movement of the Jordan center each  time a new node is added. We repeated the experiment $100$ times and present our results in Figures~\ref{fig:Jordan_SI_dist}  and \ref{fig:Jordan_SI_centre}. As for the IC model, the Jordan center remains quite close to the root over the course of the simulation, and the center only visits a few nodes during the last 30 out of the 100 timesteps. While our theoretical results for the continuous time SI model in Section~\ref{sec:Jordan_SI} did not prove persistence of the Jordan center or indeed even bounded distance from the root, empirical results strongly suggest these hold true.

\section{Discussion}
\label{sec:Discussion}
The theme of this paper broadly is how centers of an evolving graph change over time. We have looked at the specific case where an infection spreads on a regular tree following the IC and the  discrete SI models and we track the Jordan center of the infected subtree. There are several possible directions we can consider here:


\begin{figure}
        \begin{subfigure}[b]{0.49\textwidth}
                \includegraphics[width=\linewidth]{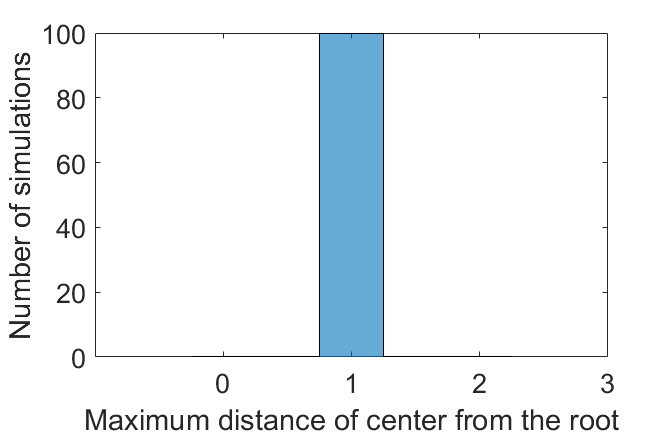}
                \caption{Distance from the root}
                \label{fig:Jordan_PA_dist}
        \end{subfigure}%
        \begin{subfigure}[b]{0.49\textwidth}
                \includegraphics[width=\linewidth]{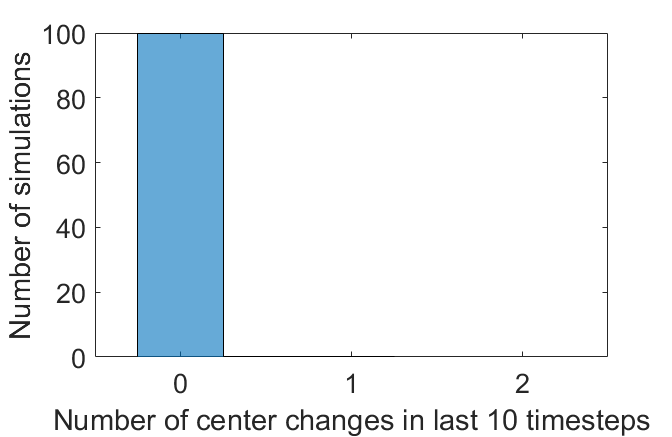}
                \caption{Number of center changes}
                \label{fig:Jordan_PA_centre}
        \end{subfigure}%
        \caption{Jordan center in the Preferential Attachment model}
\end{figure}

1) \textbf{Other graph topologies: } In this paper, we have considered infection spreads on regular trees.  Our results for the IC and the discrete SI models directly extend to irregular trees, with minimum degree of each node $d_{\min} > 3$, and $pd > 1$. This follows from the following observation: for any such irregular tree, we can consider the embedded $d_{\min}$-regular tree. Our proofs for the case of an underlying regular tree are based largely on the observation made in Corollary~\ref{cor:center_change}, which states that the Jordan center does not move as long as the two deepest subtrees continue to grow in depth. If this property holds true for the embedded 
$d_{\min}$-regular tree, then it is also true for the original irregular tree graph and thus, we have that the Jordan center will not move here as well. Similar to the previous section, we simulate the IC model with $p=0.4$ on an irregular tree where each node has a degree uniformly chosen between $3$ and $4$. These results are shown in Figures \ref{fig:Jordan_IC_dist_4} and \ref{fig:Jordan_IC_centre_4}.

Also, while we have considered infection spreading models for the growth of the random tree, one can consider other evolution models as well. For example, we can consider a tree graph growing according to the Preferential Attachment model \cite{BA} in which each new node connects to existing nodes with probability proportional to the degrees of the nodes in the previous time step. We track the Jordan center of the graph at each timestep as it grows. Some preliminary simulation results reported in Figures~\ref{fig:Jordan_PA_dist} and \ref{fig:Jordan_PA_centre}  do seem to indicate that the Jordan center is persistent in this model as well and indeed remains very close to the origin node.

\begin{figure}
        \begin{subfigure}[b]{0.49\textwidth}
                \includegraphics[width=\linewidth]{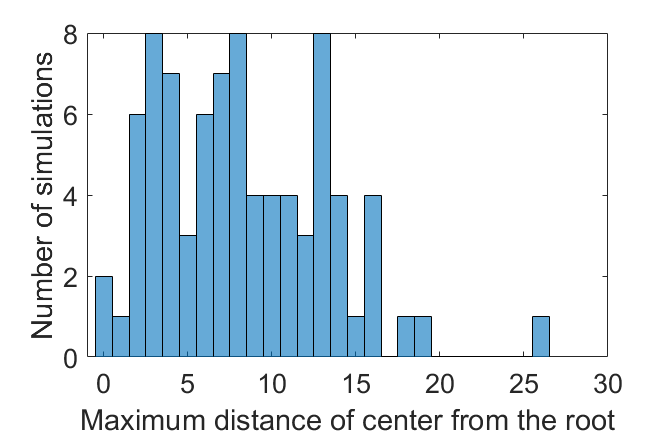}
                \caption{Distance from the root}
                \label{fig:Balancedness_IC_dist}
        \end{subfigure}%
        \begin{subfigure}[b]{0.49\textwidth}
                \includegraphics[width=\linewidth]{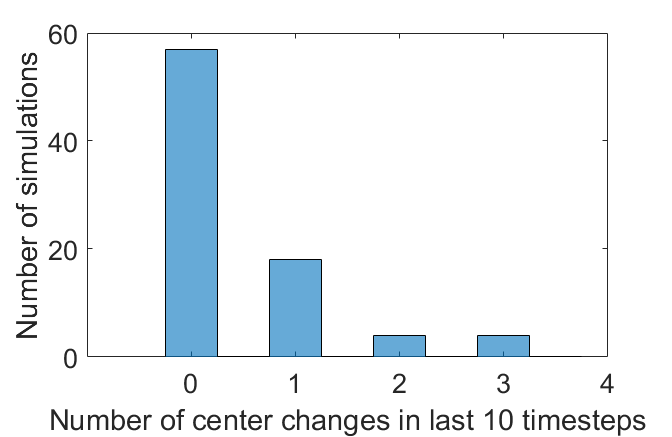}
                \caption{Number of center changes}
                \label{fig:Balancedness_IC_centre}
        \end{subfigure}%
        \caption{Balancedness Center in the IC Model}
\end{figure}

2) \textbf{Other graph centers: } While we have only considered the Jordan center in this paper, there are various other popular notions of a graph center in the literature, such as the distance center, betweenness center etc. An obvious research direction is to explore the question of persistence for different centers under various growth models. Recently, \cite{Jog1} considered scoring vertices in a tree graph based on the maximum size of the  subtrees rooted at the neighbors of the considered vertex, and then choosing the vertex with the minimum score as the tree center. We will refer to this center as the ``balancedness center". They proved the persistence of this center in both the uniform as well as preferential attachment tree growth models, as well as the (continuous-time) SI infection spread model on a regular tree. In each of these models, one new node is added to the tree at any time. On the other hand, for the IC and discrete SI models studied in this paper, several nodes can simultaneously be added to the tree and this can introduce much more variation in the movement of the balancedness center. For example, Figures~\ref{fig:Balancedness_IC_dist} and \ref{fig:Balancedness_IC_centre} indicate that the balancedness center in this case demonstrates much more movement and it will be interesting to check if the balancedness center is indeed persistent in this case. 




\section{Appendix}
\label{sec:Appendix}

We use Theorems $1$ and $2$ from \cite{mcdiarmid} in order to establish the concentration inequalities used in Section \ref{sec:Jordan_SI}. Here, we show that our problem setting satisfy all the necessary conditions for these theorems to be applied.
 
\noindent Let $Z_n(t)$ denotes the number of nodes $v$ in the $n^{th}$ generation (nodes which are at a distance of $n$ from the root node), which are born before time $t$. $Z_n = \sup_{t \rightarrow \infty} Z_n(t)$. Since our underlying tree model is a $d+1$-regular tree, we have $Z_n = d^n$. $B_n$ is defined as the first time of a birth in the $n^{th}$ generation. The time of infection of the nodes in the $n^{th}$ generation are denoted by $z_{n1}, z_{n2}, \cdots$. $F(t) = \mathbb{E}[Z_1(t)]$ and $\alpha = \inf \{t : F(t) > 0\}$. \\

\noindent Since we are working with the SI model, Definition \ref{def:SI_Model}, where the infection spread from one node to its neighbour is exponentially distributed with mean $1$, we can see that $\alpha = 0$. Now, define:
\begin{align}
\phi (\theta) = \mathbb{E}[ \sum_r e^{-\theta z_{1r}} ]
\end{align}
Since $z_{1r} \sim exp(1)$ we have:
\begin{align}
\phi (\theta) = \frac{d}{1+\theta}
\end{align}
Again, define:
\begin{align}
\mu (a) = \inf \{ e^{\theta a} \phi (\theta) : \theta \geq 0 \}
\end{align}
On substituting $\phi(\cdot)$ we get:
\begin{align}
\mu(a) = ad e^{1-a}
\end{align}
Time constant $\gamma$ is defined as:
\begin{align}
\gamma = \inf \{ a : \mu(a) \geq 1\}
\end{align}
Since $\mu(0) = 0$ and $\mu(1) = d$, we have by the continuity of $\mu(\cdot)$ that 
\begin{align}
0 < \gamma < 1
\end{align}

\noindent We have that the conditions of Theorem 2 \cite{mcdiarmid} are satisfied which gives for some constants $\gamma > 0, c_1, \delta>0$
\begin{align}
\mathbb{P}( B_n \leq \gamma n + c_1 \log n - x) \leq e^{-\delta x}
\end{align}
Since $Z_1 = d < \infty$, we apply part (b) of the theorem on $B_{n - c_3 \log n}$ to get:
\begin{align}
\mathbb{P}(B_{n - c_3 \log n} \geq \gamma ({n - c_3 \log n}) + c_2 \log ({n - c_3 \log n}) + x) \leq e^{-\delta x}
\end{align}

\end{document}